\theoremstyle{plain}
\newtheorem{thm}{Theorem} 
\newtheorem{lemma}{Lemma}
\theoremstyle{definition}
\newtheorem{defn}[thm]{Definition} 
\newtheorem{exmp}[thm]{Example} 
\newlength
\newlength
\def\x{{\mathbf x}}
\def\X{{\mathbf X}}
\def\d{{\mathbf d}}
\def\xi{\x_i}
\def\Y{{\mathbf Y}}
\def\z{{\mathbf z}}
\def\S{{\mathbf S}}
\def\G{{\mathbf G}}
\def\D{{\mathbf D}}
\def\A{{\mathbf A}}
\def\alfa{{\boldsymbol \alpha}}
\def\gama{{\boldsymbol \gamma}}
\def\Gama{{\boldsymbol \Gamma}}
\def\Delt{{\boldsymbol \Delta}}
\def\delt{{\boldsymbol \delta}}
\def\O{{\boldsymbol \Omega}}
\def\Poi{{P_{0,\infty}}}
\def\Loi{{\ell_{0,\infty}}}
\def\C{{\mathbf C}}
\def\R{{\mathbf R}}
\def\E{{\mathbf E}}
\DeclarePairedDelimiterX\set[1]\lbrace\rbrace{\def\given{\;\delimsize\vert\;}#1}
\begin{document}

\title{Working Locally Thinking Globally - Part I: Theoretical Guarantees for Convolutional Sparse Coding}
\author{Vardan~Papyan*,
	Jeremias~Sulam*,
	Michael~Elad

\thanks{*The authors contributed equally to this work. 
	
All authors are with the Computer Science Department, the Technion - Israel Institute of Technology.}}

\maketitle

\begin{abstract}
The celebrated sparse representation model has led to remarkable results in various signal processing tasks in the last decade.
However, \mbox{despite} its initial purpose of serving as a global prior for entire signals, it has been commonly used for modeling low dimensional patches due to the computational constraints it entails when deployed with learned dictionaries.
A way around this problem has been proposed recently, adopting a convolutional sparse representation model.
This approach assumes that the global dictionary is a concatenation of banded Circulant matrices.
Although several works have presented algorithmic solutions to the global pursuit problem under this new model, very few truly-effective guarantees are known for the success of such methods.
In the first of this two-part work, we address the theoretical aspects of the sparse convolutional model, providing the first meaningful answers to corresponding questions of uniqueness of solutions and success of pursuit algorithms.
To this end, we generalize mathematical quantities, such as the $\ell_0$ norm, the mutual coherence and the Spark, to their counterparts in the convolutional setting, which intrinsically capture local measures of the global model.
In a companion paper, we extend the analysis to a noisy regime, addressing the stability of the sparsest solutions and pursuit algorithms, and demonstrate practical approaches for solving the global pursuit problem via simple local processing.
\end{abstract}

\begin{IEEEkeywords}
	Sparse Representations, Convolutional Sparse Coding, Uniqueness Guarantees, Orthogonal Matching Pursuit, Basis Pursuit, Global modeling, Local Processing.
\end{IEEEkeywords}

\section{Introduction}

A popular choice for a signal model, which has proven to be very effective in a wide range of applications, is the celebrated sparse representation prior \cite{Bruckstein2009,Mairal2014,Romano2014,Dong2011}. In this framework, one assumes a signal $\X \in \mathbb{R}^N$ to be a sparse combination of a few columns (atoms) $\d_i$ from a collection $\D \in \mathbb{R}^{N\times M}$, termed dictionary. In other words, $\X = \D\Gama$ where $\Gama\in \mathbb{R}^{M}$ is a sparse vector. Finding such a vector can be formulated as the following optimization problem:
\begin{equation}
\min_{\Gama}\ g(\Gama)\ \text{ s.t. } \D\Gama = \X,
\label{Eq:Global Sparse Model}
\end{equation}
where $g(\cdot)$ is a function which penalizes dense solutions, such as the $\ell_1$ or $\ell_0$ ``norms''\footnote{Despite the $\ell_0$ not being a norm (as it does not satisfy the homogeneity property), we will use this jargon throughout this paper for the sake of simplicity.}. For many years, analytically defined matrices or operators were used as the dictionary $\D$ \cite{Mallat1993,Elad2005}. However, designing a model from real examples by some learning procedure has proven to be more effective, providing sparser solutions \cite{Engan1999,Aharon2006,Mairal2009a}. This led to vast work that deploys dictionary learning in a variety of applications \cite{Li2011,Zhang2010,Dong2011,Gao2012,Yang2014}.

Generally, solving a pursuit problem is a computationally challenging task. As a consequence, most such recent successful methods have been deployed on relatively small dimensional signals, commonly referred to as \emph{patches}. Under this \emph{local} paradigm, the signal is broken into overlapped blocks and the above defined sparse coding problem is reformulated as
\begin{equation}
\forall \ i \quad \min_{\alfa}\ g(\alfa)\ \text{ s.t. }\ \D_L\alfa = \R_i\X,
\end{equation}
where $\D_L \in \mathbb{R}^{n\times m}$ is a local dictionary, and $\R_i\in \mathbb{R}^{n\times N}$ is an operator which extracts a small local patch of length $n\ll N$ from the global signal $\X$. In this set-up, one processes each patch independently and then aggregates the estimated results using plain averaging in order to recover the global reconstructed signal. A local-global gap naturally arises when solving global tasks with this local approach. The reader is referred to \cite{Sulam2015,Romano2015b,Romano2015,Papyan2016,batenkov2017global,Zoran2011} for further insights on this dichotomy.

The above discussion suggests that in order to find a consistent global representation for the signal, one should propose a global sparse model. However, employing a general global dictionary is infeasible due to the curse of dimensionality and the complexity involved. An alternative is a global model in which the signal is composed as a superposition of local atoms. The family of dictionaries giving rise to such signals is a concatenation of banded Circulant matrices. This global model benefits from having a local shift invariant structure -- a popular assumption in signal and image processing -- suggesting an interesting connection to the above-mentioned local modeling.

When the dictionary $\D$ has this structure of a concatenation of banded Circulant matrices, the pursuit problem in \eqref{Eq:Global Sparse Model} is usually known as convolutional sparse coding \cite{Grosse2007}. Recently, several works have addressed the problem of using and training such a model in the context of image inpainting, super-resolution, and general image representation \cite{Bristow2013,Heide2015,Kong2014,Wohlberg2014,Gu2015}. These methods exploit an ADMM formulation \cite{Boyd2011} in the Fourier domain in order to search for the sparse codes and train the dictionary involved. Several variations have been proposed for solving the pursuit problem, yet there has been no theoretical analysis of their success. 

In this two-part work, we consider the following set of questions: Assume a signal $\X$ is created by multiplying a sparse vector $\Gama$ by a global structured dictionary $\D$ that consists of a union of banded and Circulant matrices. Then, 
\begin{enumerate}
	\item Can we guarantee the uniqueness of such a global (convolutional) sparse vector?
	\item Can global pursuit algorithms, such as the ones suggested in recent works, be guaranteed to find the true underlying sparse code, and if so, under which conditions?
	\item Can we guarantee a stability of the sparse approximation problem, and a stability of corresponding pursuit methods in a noisy regime?; And
	\item Can we solve the global pursuit by restricting the process to local pursuit operations?
\end{enumerate}
In part I of our work, we focus on answering questions 1 and 2, while questions 3 and 4 are analyzed and answered in the sequel paper.

A na\"ive approach to address such theoretical questions is to apply the fairly extensive results for sparse representation and compressed sensing to the above defined model \cite{Elad_Book}. However, as we will show throughout this paper, this strategy provides nearly useless results and bounds from a global perspective. Therefore, there exists a true need for a deeper and alternative analysis of the sparse coding problem in the convolutional case which would yield meaningful bounds. 

In this work, we will demonstrate the futility of the $\ell_0$-norm in capturing the concept of sparsity in the convolutional model. This, in turn, motivates us to propose a new localized measure -- the $\Loi$ norm. Based on it, we redefine our pursuit into a problem that operates locally while thinking globally. To analyze this problem, we extend useful concepts, such as the Spark and mutual coherence, to the convolutional setting. We then provide claims for uniqueness of solutions and for the success of pursuit methods in the noiseless case, both for greedy algorithms and convex relaxations. These are the main concerns of this part and will pave the theoretical foundations for part II, where we will extend the analysis to a more practical scenario of handling noisy data.


This paper is organized as follows. We begin by reviewing the unconstrained global (traditional) sparse representation model in Section \ref{Sec:Preliminaries}, followed by a detailed description of the convolutional structure in Section \ref{Sec:Conv_Model}. Section \ref{Sec:Global2Local} briefly motivates the need of a thorough analysis of this model, which is then provided in Section \ref{Sec:TheoStudy}. We introduce additional mathematical tools in Section \ref{Sec:ShiftedMutualCoherence}, which provide further insight into the convolutional model. 
Finally, we conclude and motivate the next part of this work in Section \ref{Sec:Conclusions}.

\section{The Global Sparse Model -- Preliminaries}
\label{Sec:Preliminaries}

The by-now classic sparse representation model assumes a signal $\X \in \mathbb{R}^N$ can be expressed as $\X = \D \Gama$, where $\D \in \mathbb{R}^{N\times M}$, $\Gama \in \mathbb{R}^M$ and $\|\Gama\|_0 \ll N$. In this last expression, the $\ell_0$ pseudo-norm $\| \cdot \|_0$ counts the non-zero elements in its argument. Finding the sparsest vector for a given signal is known as Sparse Coding, and it attempts to solve the constrained $P_0$ problem:
\begin{equation}
(P_0): \quad \underset{\Gama}{\min}\ \|\Gama\|_0 \ \text{ s.t. } \  \D\Gama = \X. 
\label{Eq:P0problem}
\end{equation}
Several results have shed light on the theoretical aspects of this problem, claiming a unique solution under certain circumstances. These guarantees are given in terms of properties of the dictionary $\D$, such as the \emph{Spark}, defined as the minimum number of linearly dependent columns (atoms) in $\D$ \cite{Donoho2003}. Formally,
\begin{equation}
\sigma(\D) = \underset{\Gama}{\min}\ \|\Gama\|_0 \ \text{ s.t. } \ \D\Gama = \mathbf{0}, \ \Gama \neq \mathbf{0}.
\end{equation}
Based on this property, a solution obeying $\|\Gama\|_0 < \sigma(\D)/2$ is necessarily the sparsest one \cite{Donoho2003}. Unfortunately, this bound is of little practical use, as computing the Spark of a matrix is a combinatorial problem -- and infeasible in practice.  

Another guarantee is given in terms of the \emph{mutual coherence} of the dictionary, $\mu(\D)$. This measure quantifies the similarity of atoms in the dictionary, defined in \cite{Donoho2003} as:
\begin{equation}
\mu(\D) = \underset{i\neq j}{\max} \frac{|\d_i^T \d_j|}{\|\d_i\|_2\cdot\|\d_j\|_2}.
\end{equation}
Hereafter, we will assume without loss of generality that all atoms in $\D$ are normalized to unit $\ell_2$ norm. A relation between the Spark and the mutual coherence was shown in \cite{Donoho2003}, stating that $\sigma(\D) \geq 1 + \frac{1}{\mu(\D)}$. This, in turn, enables the formulation of a practical uniqueness bound guaranteeing that $\Gama$ is the unique solution of the $P_0$ problem if $\|\Gama\|_0 < \frac{1}{2} \left(1 + 1/\mu(\D)\right)$.

Solving the $P_0$ problem is NP-hard in general. Nevertheless, its solution can be approximated by either greedy pursuit algorithms, such as the Orthogonal Matching Pursuit (OMP) \cite{Pati1993a,Chen1989}, or convex relaxation approaches like Basis Pursuit (BP) \cite{Chen2001}. Despite the difficulty of this problem, these methods (and other similar ones) have been proven to recover the true solution if $\|\Gama\|_0 < \frac{1}{2} \left(1 + 1/\mu(\D)\right)$ \cite{Donoho2006,Tropp2004,Donoho2003,Gribonval2003}.

In real world applications, due to noisy measurements and model imperfections, the idealistic setting portrayed above is not directly applicable. Nevertheless, one can extend the model to include signal perturbations, obtaining the following problem:
\begin{equation}
(P_0^\epsilon): \quad \underset{\Gama}{\min}\ \|\Gama\|_0 \ \text{ s.t. } \|\D\Gama - \Y\|_2 \leq \epsilon.
\end{equation}
We will deffer studying this case here, as it will be analyzed in detail in part two of this report.

\section{The Convolutional Sparse Model}
\label{Sec:Conv_Model}
When handling large dimensional signals, using an unstructured dictionary becomes unfeasible. 
In this section we will enforce a constraint on the global dictionary, resulting in both theoretical and practical benefits.

\begin{figure*}[t]
	\centering
	\includegraphics[trim = -40 60 40 30 ,width=0.9\textwidth]{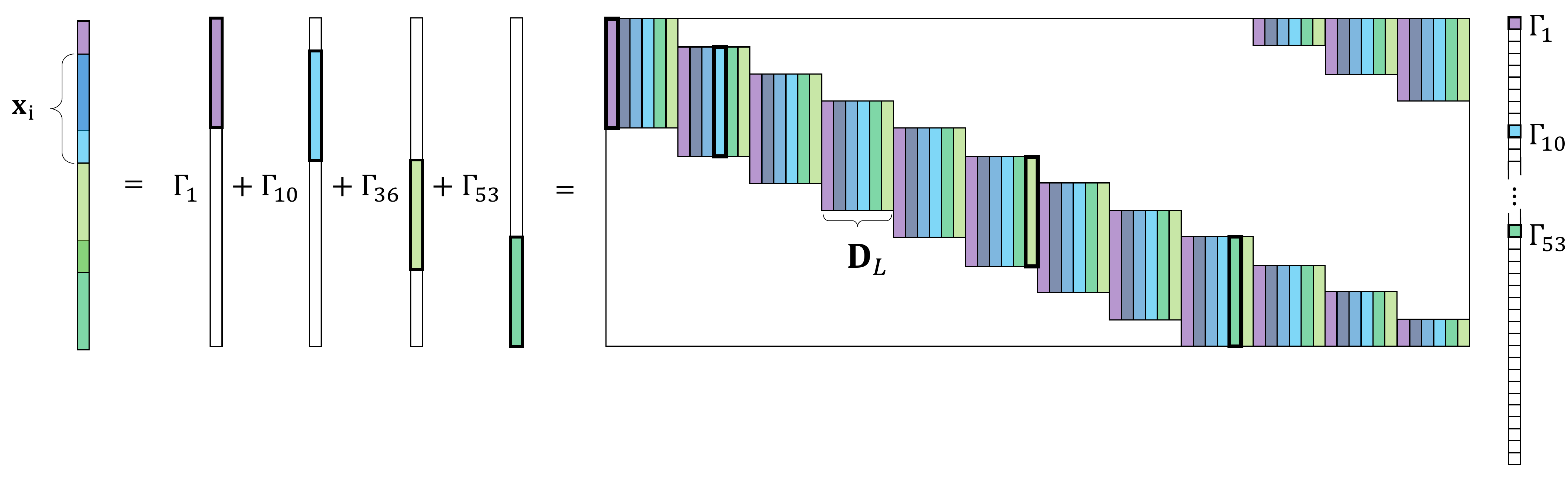}
	\caption{The convolutional model description, and its composition in terms of the local dictionary $\D_L$.}
	\label{GlobalDiagram}
\end{figure*}

Consider the global dictionary to be a concatenation of $m$ banded Circulant matrices\footnote{The choice of Circulant matrices comes to alleviate boundary problems.}, where each such matrix has a band of width $n \ll N$. As such, by simple permutation of its columns, such a dictionary consists of all shifted versions of a \emph{local} dictionary $\D_L$ of size $n\times m$. This model is commonly known as Convolutional Sparse Representation \cite{Grosse2007,Bristow2014,Heide2015}. Hereafter, whenever we refer to the global dictionary $\D$, we assume it has this structure. Assume a signal $\X$ to be generated as $\D\Gama$. In Figure \ref{GlobalDiagram} we describe such a global signal, its corresponding dictionary that is of size $N \times mN$ and its sparse representation, of length $mN$. We note that $\Gama$ is built of $N$ distinct and independent sparse parts, each of length $m$, which we will refer to as the local sparse vectors $\alfa_i$.
In this section we shall propose several different interpretations of signals emerging from this model, and as we shall see, these will serve us well in the later analysis.
 
Consider a sub-system of equations extracted from $\X=\D\Gama$, by multiplying this system by the patch extraction operator $\R_i$. The resulting system is $\x_i$ = $\R_i \X = \R_i \D \Gama$, where $\x_i$ is a patch of length $n$ extracted from $\X$ from location $i$. Observe that in the set of rows extracted, $\R_i \D$, there are only $(2n-1)m$ columns that are non-trivially zero. Define the operator $\S_i\in \mathbb{R}^{(2n-1)m \times mN}$ as a columns' selection operator, such that $\R_i \D \S_i^T$ preserves all the non-zero columns in $\R_i \D$. Thus, the subset of equations we got is essentially 
\begin{equation} \label{eq:xyz}
\x_i = \R_i \X = \R_i \D \Gama = \R_i \D \S_i^T \S_i \Gama.
\end{equation}
\begin{defn}
Consider a global sparse vector $\Gama$. Define $\gama_i = \S_i \Gama$ as its $i^{th}$ stripe representation.
\end{defn}
\noindent
Note that a stripe $\gama_i$ can be also seen as a group of $2n-1$ adjacent local sparse vectors $\alfa_j$ of length $m$ from $\Gama$, centered at location $\alfa_i$.
\begin{defn}
Consider a convolutional dictionary $\D$ defined by a local dictionary $\D_L$ of size $n \times m$. Define the stripe dictionary $\O$ of size $n \times (2n - 1)m$, as the one obtained by extracting $n$ consecutive rows from $\D$, followed by the removal of its zero columns, namely $\O = \R_i \D \S_i^T$.
\end{defn}
\noindent
Observe that $\O$, depicted in Figure \ref{PartialStripe}, is independent of $i$, being the same for all locations due to the union-of-Circulant-matrices structure of $\D$. In other words, the shift invariant property is satisfied for this model -- all patches share the same stripe dictionary in their construction. Armed with the above two definitions, Equation \eqref{eq:xyz} reads $\x_i = \O \gama_i$. 


From a different perspective, one can synthesize the signal $\X$ by a different interpretation of the relation $\X=\D\Gama$, shown in Figure \ref{GlobalDiagram}. The matrix $\D$ is a concatenation of $N$ vertical stripes of size $N\times m$, where each can be represented as $\R_i^T \D_L$. In other words, the vertical stripe is constructed by taking the small and local dictionary $\D_L$ and positioning it in the $i^{th}$ row. As we have already said, the same partitioning applies to $\Gama$, leading to the $\alfa_i$ ingredients. Thus,  
\begin{equation}
\X = \sum_i \R_i^T \D_L \alfa_i.
\end{equation}
Since $\alfa_i$ play the role of local sparse vectors, $\D_L \alfa_i$ are reconstructed patches (which are not the same as $\x_i = \O \gama_i$), and the sum above proposes a patch averaging approach as practiced in several papers \cite{Aharon2006,Zoran2011,Sulam2015}. This formulation provides another local interpretation of the convolutional model.

Yet a third interpretation of the very same signal construction can be suggested, in which the signal is seen as resulting from a sum of local/small atoms which appear in a small number of locations throughout the signal. This can be formally expressed as 
\begin{equation}
\X = \sum_{i=1}^m \d_i \ast \z_i, 
\end{equation}
where the vectors $\z_i\in \mathbb{R}^N$ are sparse maps encoding the location and coefficients of the $i^{th}$ atom \cite{Grosse2007}. In our context, $\Gama$ is simply the interlaced concatenation of all $\z_i$.

This model (adopting the last, convolutional, interpretation) has received growing attention in recent years in various applications. In \cite{Morup2008} a convolutional sparse coding framework was used for pattern detection in images and the analysis of instruments in music signals, while in \cite{Zhu2015} it was used for the reconstruction of 3D trajectories. The problem of learning the local dictionary $\D_L$ was also studied in several works \cite{Zeiler2010,Kavukcuoglu2010,Bristow2014,Huang2015}.

Different methods have been proposed for solving the convolutional sparse coding problem under an $\ell_1$-norm penalty. Commonly, these methods rely on the ADMM algorithm \cite{Boyd2011}, exploiting multiplications of vectors by the global dictionary in the Fourier domain in order to reduce the computational cost involved. The reader is referred to \cite{Bristow2014} for a thorough review of related methods. 
In essence, these are attempts to minimize a cost function which is a BP problem under the convolutional structure.
As a result, the theoretical results in our work will also apply to these methods, providing guarantees for the recovery of the underlying sparse vectors.
An interesting exception to the $\ell_1$ norm is the work reported in \cite{Szlam2010}, where the authors suggest an $\ell_0$ constraint on the sparse vectors. This algorithm, called Convolutional Matching Pursuit, was used to extract features from natural images. Up to the orthogonal projection step, this greedy method is a global OMP, for which we will also provide novel recovery guarantees. 
\begin{figure}[t]
	\centering
	\includegraphics[trim = 0 50 0 20,width=0.5\textwidth]{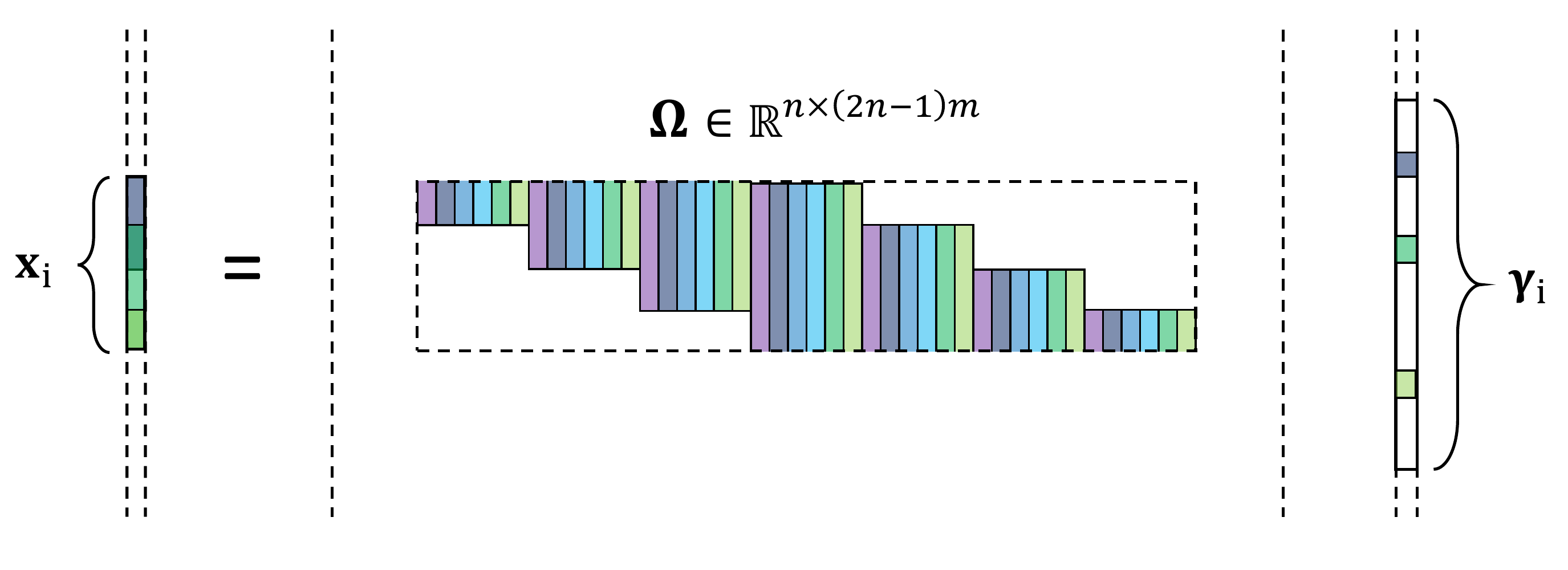}
	\caption{Stripe Dictionary}
	\label{PartialStripe}
	\vspace{-0.3cm}
\end{figure}
\section{From Global to Local Analysis}
\label{Sec:Global2Local}

Consider a sparse vector $\Gama$ of size $mN$ which represents a global (convolutional) signal. Assume further that this vector has a few $k \ll N$ non-zeros. If these were to be clustered together in a given stripe $\gama_i$, the local patch corresponding to this stripe would be very complex, and pursuit methods would likely fail in recovering it. On the contrary, consider the case where these $k$ non-zeros are spread all throughout the vector $\Gama$. This would clearly imply much simpler local patches, facilitating their successful recovery. This simple example comes to show the futility of the traditional global $\ell_0$-norm in the convolutional setting, and it will be the pillar of our intuition throughout our work.

\subsection{The $\ell_{0,\infty}$ Norm and the $P_{0,\infty}$ Problem}

Let us now introduce a measure that will provide a local notion of sparsity within a global sparse vector.
\begin{defn}
Define the $\Loi$ pseudo-norm of a global sparse vector $\Gama$ as
\begin{equation}
\|\Gama\|_{0,\infty} = \max_i \|\gama_i\|_0.
\end{equation}
\end{defn}
\noindent
In words, this quantifies the number of non-zeros in the densest stripe $\gamma_i$ of the global $\Gama$. This is equivalent to extracting all stripes from the global sparse vector $\Gama$, arranging them column-wise into a matrix $\A$ and applying the usual $\|\A\|_{0,\infty}$ norm -- thus, the name. Note that by constraining the $\Loi$ norm to be low, we are essentially limiting the sparsity of all the stripes $\gama_i$. Similar to $\ell_0$, in the $\Loi$ norm the non-negativity and triangle inequality properties hold, while homogeneity does not. Since non-negativity is trivial, we only prove the triangle inequality in Appendix \ref{sect:TraingIneqLoi}.

Armed with the above definition, we move now to define the $\Poi$ problem:
\begin{equation}
(\Poi): \quad \min_\Gama \quad \|\Gama\|_{0,\infty} \ \text{ s.t. }\ \D\Gama=\X.
\end{equation}
When dealing with a global signal, instead of solving the $P_0$ problem (defined in Equation \eqref{Eq:P0problem}) as is commonly done, we aim to solve the above defined objective instead. The key difference is that we are not limiting the overall number of zeros in $\Gama$, but rather putting a restriction on its local density. 

\subsection{Global versus Local Bounds}
\label{GlobalVsLocal}
As mentioned previously, theoretical bounds are often given in terms of the mutual coherence of the dictionary. In this respect, a lower bound on this value is much desired. In the case of our convolution sparse model, this value quantifies not only the correlation between the atoms in $\D_L$, but also the correlation between their shifts. Though in a different context, a bound for this value was derived in \cite{Welch1974}, and it is given by
\begin{equation} \label{eq:mu_bound}
\mu(\D)\geq\sqrt{\frac{m-1}{m(2n-1)-1}}.
\end{equation}
For example, if $m=1$ (one local atom with all its shifts), this suggests that $\D$ might be an orthogonal matrix, and thus $\mu(\D)=0$. Going to the other extreme, for a large value of $m$ one obtains that the best possible coherence is \mbox{$\mu(\D)\approx \frac{1}{\sqrt{2n}}$} -- this is a very high value (e.g., if $n=128$, this coherence bound is $1/16$), considering the fact that it characterizes the whole global dictionary. This implies that if we are to apply BP or OMP to recover the sparsest $\Gama$ that represents $\X$, the classical sparse approximation results \cite{Bruckstein2009} would allow merely $O(\sqrt{n})$ non-zeros in \textbf{all} $\Gama$, for any $N$, no matter how long $\X$ is!

As we shall see next, the situation is not as grave as may seem, due to our migration from $P_0$ to $\Poi$.
Leveraging on the definitions from the previous subsection, we will provide recovery guarantees that will have a local flavor, and the bounds will be given in terms of the number of non-zeros in the densest stripe. This way, we will show that the guarantee conditions can be significantly enhanced to $O(\sqrt{n})$ non-zeros \emph{locally} rather than \emph{globally}.

\vspace{-0.2cm}
\section{Theoretical Study} \label{Theoretical Study}
\label{Sec:TheoStudy}
As motivated in the previous section, the concerns of uniqueness, recovery guarantees and stability of sparse solutions in the convolutional case require special attention. We now formally address these questions by closely following the path taken in \cite{Elad_Book}, carefully generalizing each and every statement to the global-local model discussed here.

Before proceeding onto theoretical grounds, we briefly summarize, for the convenience of the reader, all notations used throughout this work in Table \ref{Table:Notations}. Note the unorthodox choice of capital letters for global vectors and lowercase for local ones.

\vspace{0.5cm}

\begin{table}[t] \centering
\begin{tabular}{|p{1.85cm}@{:\quad}l|} \hline
$N$ & length of the global signal. \\ \hline
$n$ & size of a local atom or a local signal patch. \\ \hline
$m$ & \pbox{20cm}{number of unique local atoms (filters) or the number\\ of Circulant matrices.} \\[0.2cm] \hline
$\X$, $\Y$ and $\E$ & \pbox{20cm}{global signals of length $N$, where generally\\ $\Y=\X+\E$.} \\ \hline
$\D$ & global dictionary of size $N\times mN$. \\ \hline
$\Gama$ and $\Delt$ & global sparse vectors of length $mN$. \\ \hline
$\Gamma_i$ and $\Delta_i$ & the $i^{th}$ entry in $\Gama$ and $\Delt$, respectively. \\ \hline
$\D_L$ & local dictionary of size $n\times m$. \\ \hline
$\O$ & \pbox{20cm}{stripe dictionary, of size $n\times(2n-1)m$, which \\ contains all possible shifts of $\D_L$.} \\ \hline
$\alfa_i$ & local sparse code of size $m$. \\ \hline
$\gama_i$ and $\delt_i$ & \pbox{20cm}{a stripe of length $(2n-1)m$ extracted \\ from the global vectors $\Gama$ and $\Delt$, respectively.} \\ \hline
$\gama_{i,s}$ and $\delt_{i,s}$ & \pbox{20cm}{a local sparse vector of length $m$ which corresponds \\ to the $s^{th}$ portion inside $\gama_i$ and $\delt_i$, respectively.} \\ \hline
\end{tabular}
\caption{Summary of notations used throughout the paper.}
\label{Table:Notations}
\vspace{-0.4cm}
\end{table}

\vspace{-0.6cm}
\subsection{Uniqueness and Stripe-Spark}

Just as it was initially done in the general sparse model, one might ponder about the uniqueness of the sparsest representation in terms of the $\Loi$ norm. More precisely, does a unique solution to the $\Poi$ problem exist? and under which circumstances? In order to answer these questions we shall first extend our mathematical tools, in particular the characterization of the dictionary, to the convolutional scenario.

In Section \ref{Sec:Preliminaries} we recalled the definition of the Spark of a general dictionary $\D$. In the same spirit, we can propose the following:
\begin{defn}
Define the Stripe-Spark of a convolutional dictionary\ $\D$ as
\begin{equation}
\sigma_\infty(\D)= \min_\Delt \quad \|\Delt\|_{0,\infty} \ \text{ s.t. }\  \Delt\neq 0,\ \D\Delt=0.
\end{equation}
\end{defn}
\noindent
In words, the Stripe-Spark is defined by the sparsest non-zero vector, in terms of the $\Loi$ norm, in the null space of $\D$. 
Next, we shall use this definition in order to formulate an uncertainty and a uniqueness principle for the $\Poi$ problem that emerges from it.
\begin{thm}{(Uncertainty and uniqueness using Stripe-Spark):}
	Let $\D$ be a convolutional dictionary. If a solution $\Gama$ obeys \mbox{$\|\Gama\|_{0,\infty}<\frac{1}{2}\sigma_\infty$}, then this is necessarily the global optimum for the $\Poi$ problem for the signal $\D\Gama$.
\end{thm}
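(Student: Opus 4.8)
The plan is to mimic the classical Spark-based uniqueness argument, but with the $\Loi$ norm playing the role of $\ell_0$. First I would argue by contradiction: suppose $\Gama$ with $\|\Gama\|_{0,\infty} < \tfrac12 \sigma_\infty$ is not the global optimum, so there exists another representation $\Delt$ with $\D\Delt = \D\Gama$ and $\|\Delt\|_{0,\infty} \le \|\Gama\|_{0,\infty}$ (in fact we need $\Delt \ne \Gama$, and we may as well take it to be a minimizer). Then the difference $\Gama - \Delt$ lies in the null space of $\D$ and is nonzero, so by definition of the Stripe-Spark, $\|\Gama - \Delt\|_{0,\infty} \ge \sigma_\infty$.

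The key step is then the triangle inequality for the $\Loi$ norm, which the excerpt states holds (proved in the appendix):
\begin{equation}
\|\Gama - \Delt\|_{0,\infty} \le \|\Gama\|_{0,\infty} + \|\Delt\|_{0,\infty} \le 2\|\Gama\|_{0,\infty} < \sigma_\infty,
\end{equation}
contradicting $\|\Gama - \Delt\|_{0,\infty} \ge \sigma_\infty$. This establishes uniqueness. The ``uncertainty'' half of the statement is really the inequality just displayed read in the other direction: any two distinct feasible representations $\Gama, \Delt$ of the same signal must satisfy $\|\Gama\|_{0,\infty} + \|\Delt\|_{0,\infty} \ge \sigma_\infty$, so they cannot both be too sparse in the $\Loi$ sense — I would state and prove this first, then derive uniqueness as an immediate corollary (if both had $\Loi$ norm below $\tfrac12\sigma_\infty$, their sum would be below $\sigma_\infty$, impossible).

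I do not expect a serious obstacle here, since the argument is structurally identical to the classical one and the only nontrivial ingredient — subadditivity of $\Loi$ — is already granted by the paper. The one point that deserves a line of care is confirming that $\Gama - \Delt \ne 0$ (immediate from $\Gama \ne \Delt$) and that it is genuinely in $\ker \D$ (immediate from $\D\Gama = \X = \D\Delt$), so that the Stripe-Spark lower bound $\|\Gama - \Delt\|_{0,\infty} \ge \sigma_\infty$ legitimately applies. If anything is ``hard,'' it is purely bookkeeping: making sure the roles of strict versus non-strict inequalities line up so that the final contradiction is clean, i.e. that $2\|\Gama\|_{0,\infty} < \sigma_\infty$ strictly while the null-space vector forces $\ge \sigma_\infty$.
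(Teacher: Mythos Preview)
Your proposal is correct and matches the paper's proof essentially step for step: take a second representation, observe that the difference lies in the null space of $\D$ so its $\Loi$ norm is at least $\sigma_\infty$, apply the triangle inequality for $\Loi$, and conclude. The only cosmetic difference is that the paper states the uncertainty inequality $\|\Gama\|_{0,\infty}+\|\hat{\Gama}\|_{0,\infty}\ge\sigma_\infty$ first and reads off uniqueness directly, whereas you frame it as a contradiction; the content is identical.
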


\begin{proof}
	Let $\hat{\Gama}\neq\Gama$ be an alternative solution. Then $\D\left(\Gama-\hat{\Gama}\right)=0$. By definition of the Stripe-Spark
	\begin{equation}
	\|\Gama-\hat{\Gama}\|_{0,\infty}\geq\sigma_\infty.
	\end{equation}
	Using the triangle inequality of the $\Loi$ norm,
	\begin{equation}
	\|\Gama\|_{0,\infty}+\|\hat{\Gama}\|_{0,\infty}\geq\|\Gama-\hat{\Gama}\|_{0,\infty}\geq\sigma_\infty.
	\end{equation}
	This result poses an uncertainty principle for $\Loi$ sparse solutions of the system $\X = \D\Gama$, suggesting that if a very sparse solution is found, all alternative solutions must be much denser. Since $\|\Gama\|_{0,\infty}<\frac{1}{2}\cdot\sigma_\infty$, we must have that $\|\hat{\Gama}\|_{0,\infty}>\frac{1}{2}\cdot\sigma_\infty$, or in other words, every solution other than $\Gama$ has higher $\Loi$ norm, thus making $\Gama$ the global solution for the $\Poi$ problem.
\end{proof}

\subsection{Lower Bounding the Stripe-Spark}

In general, and similar to the Spark, calculating the Stripe-Spark is computationally intractable. Nevertheless, one can bound its value using the global mutual coherence defined in Section \ref{Sec:Preliminaries}. Before presenting such bound, we formulate and prove a Lemma that will aid our analysis throughout this paper.

\begin{lemma}{}
\label{lemma:GershgorinConvolutional}
Consider a convolutional dictionary $\D$, with mutual coherence $\mu(\D)$, and a support $\mathcal{T}$ with $\Loi$ norm\footnote{Note that specifying the $\Loi$ of a support rather than a sparse vector is a slight abuse of notation, that we will nevertheless use for the sake of simplicity.} equal to $k$. Let $\G^\mathcal{T} = \D_\mathcal{T}^T \D_\mathcal{T}$, where $\D_\mathcal{T}$ is the matrix $\D$ restricted to the columns indicated by the support $\mathcal{T}$. Then, the eigenvalues of this Gram matrix, given by $\lambda_i \left(\G^\mathcal{T}\right)$, are bounded by
\begin{equation}
1-(k-1)\mu(\D) \ \leq \ \lambda_i \left(\G^\mathcal{T}\right) \ \leq \ 1+(k-1)\mu(\D).
\end{equation}
\end{lemma}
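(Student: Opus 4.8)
The plan is to reproduce the classical Gershgorin-type bound on the Gram matrix of a sub-dictionary, but to exploit the banded convolutional structure so that the \emph{global} cardinality of $\mathcal{T}$ is replaced by the \emph{local} quantity $k=\|\mathcal{T}\|_{0,\infty}$. First I would record the easy ingredients. Since all atoms are normalized to unit $\ell_2$ norm, every diagonal entry of $\G^\mathcal{T}=\D_\mathcal{T}^T\D_\mathcal{T}$ equals $1$. Since $\G^\mathcal{T}$ is a Gram matrix it is symmetric positive semidefinite, so its eigenvalues are real and Gershgorin's circle theorem applies: for each eigenvalue $\lambda_i(\G^\mathcal{T})$ there is a row index $p\in\mathcal{T}$ with $\bigl|\lambda_i(\G^\mathcal{T})-1\bigr|\le\sum_{\ell\neq p}\bigl|(\G^\mathcal{T})_{p\ell}\bigr|$. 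It then remains to bound each such row-sum by $(k-1)\mu(\D)$.

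The crux is a counting argument based on the support overlap of the atoms. Each atom $\d_p$ of the convolutional dictionary is a shifted copy of a column of $\D_L$ supported on $n$ consecutive coordinates indexed by the local block it belongs to, and two atoms have a nonzero inner product only if their supports overlap, which forces their block indices to differ by at most $n-1$. Consequently, all atoms whose support meets that of a fixed atom $\d_p$ lie within the $2n-1$ consecutive local blocks that constitute the stripe centered at $p$. Hence the atoms of $\mathcal{T}$ that can contribute a nonzero off-diagonal entry to the $p$-th row of $\G^\mathcal{T}$ form a subset of the atoms of $\mathcal{T}$ inside that stripe, whose number is $\|\gama_p\|_0\le\|\mathcal{T}\|_{0,\infty}=k$, including $\d_p$ itself; excluding $\d_p$ leaves at most $k-1$ nonzero off-diagonal entries in that row, each of magnitude at most $\mu(\D)$ by definition of the mutual coherence. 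Substituting into the Gershgorin bound gives $1-(k-1)\mu(\D)\le\lambda_i(\G^\mathcal{T})\le 1+(k-1)\mu(\D)$, as claimed.

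The main obstacle I anticipate is making the counting step fully rigorous: precisely identifying the index set of atoms whose supports overlap a fixed atom with the coordinates of a single stripe $\gama_p$ of the (indicator) vector associated with $\mathcal{T}$, so that the $\Loi$ bound can legitimately be invoked, and handling the off-by-one carefully (the $p$-th row already contains the diagonal entry for $\d_p$, so the count of off-diagonal nonzeros is $k-1$ rather than $k$). The remaining elements — unit diagonal, applicability of Gershgorin to the symmetric PSD matrix $\G^\mathcal{T}$, and the bound $|(\G^\mathcal{T})_{p\ell}|\le\mu(\D)$ for $\ell\neq p$ — are routine.
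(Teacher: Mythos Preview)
Your proposal is correct and follows essentially the same approach as the paper: apply Gershgorin's theorem to $\G^{\mathcal{T}}$ (whose diagonal is all ones), then observe that in any fixed row only atoms within the corresponding stripe contribute nonzero off-diagonal entries, so the row-sum is bounded by $(k-1)\mu(\D)$. Your anticipated ``obstacle'' --- identifying the overlapping atoms with a single stripe and handling the off-by-one from excluding the diagonal --- is exactly how the paper argues it as well.
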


\begin{proof}
	From Gerschgorin's theorem, the eigenvalues of the Gram matrix $\G^\mathcal{T}$ reside in the union of its Gerschgorin circles. The $j^{th}$ circle, corresponding to the $j^{th}$ row of $\G^\mathcal{T}$, is centered at the point $\G^{\mathcal{T}}_{j,j}$ (belonging to the Gram's diagonal) and its radius equals the sum of the absolute values of the off-diagonal entries; i.e., $\sum_{i,i\neq j} |\G^{\mathcal{T}}_{j,i}|$. Notice that both indices $i,j$ correspond to atoms in the support $\mathcal{T}$. Because the atoms are normalized, $\forall\ j,\ \G^{\mathcal{T}}_{j,j} = 1$, implying that all Gershgorin disks are centered at $1$. Therefore, all eigenvalues reside inside the circle with the largest radius. Formally,
	\begin{equation} \label{eq:max_radius}
	\big| \lambda_i\left(\G^\mathcal{T}\right) - 1 \big| \leq \max_j \sum_{i,i\neq j}  \big| \G^{\mathcal{T}}_{j,i} \big| = \max_j \sum_{\substack{i,i\neq j \\ i,j \in \mathcal{T}}} | \d_j^T\d_i \big|.
	\end{equation}
	On the one hand, from the definition of the mutual coherence, the inner product between atoms that are close enough to overlap is bounded by $\mu(\D)$. On the other hand, the product $\d_j^T\d_i$ is zero for atoms $\d_i$ too far from $\d_j$ (i.e., out of the stripe centered at the $j^{th}$ atom). Therefore, we obtain:
	\begin{equation}
	\sum_{\substack{i,i\neq j \\ i,j \in \mathcal{T}}} | \d_j^T\d_i | \leq (k - 1)\ \mu(\D),
	\end{equation}
	where $k$ is the maximal number of non-zero elements in a stripe, defined previously as the $\Loi$ norm of $\mathcal{T}$. Note that we have subtracted $1$ from $k$ because we must omit the entry on the diagonal. Putting this back in Equation \eqref{eq:max_radius}, we obtain
	\begin{equation}
	\big| \lambda_i\left(\G^\mathcal{T}\right) - 1 \big| \leq \max_j \sum_{\substack{i,i\neq j \\ i,j \in \mathcal{T}}} | \d_j^T\d_i \big| \leq\ (k - 1)\ \mu(\D).
	\end{equation}
	From this we obtain the desired claim.
\end{proof}


Based on this, we dive into the next theorem.

\begin{thm}{(Lower bounding the Stripe-Spark via the local coherence):}
	For a convolutional dictionary $\D$ with mutual coherence $\mu(\D)$, the Stripe-Spark can be lower-bounded by
	\begin{equation}
	\sigma_\infty(\D)\geq 1+\frac{1}{\mu(\D)}.
	\end{equation}
\end{thm}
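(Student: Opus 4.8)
The plan is to mirror the classical chain of reasoning that yields $\sigma(\D) \geq 1 + 1/\mu(\D)$ in the unstructured case, but feeding it the stripe-localized Gershgorin estimate of Lemma \ref{lemma:GershgorinConvolutional} in place of the usual one.

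First I would take a minimizer of the Stripe-Spark: a vector $\Delt \neq 0$ with $\D\Delt = 0$ and $\|\Delt\|_{0,\infty} = \sigma_\infty(\D) =: k$. Let $\mathcal{T}$ be the support of $\Delt$, so that the $\Loi$ norm of $\mathcal{T}$ equals $k$. Since the restriction of $\Delt$ to $\mathcal{T}$ is a nonzero vector annihilated by $\D_\mathcal{T}$, the columns of $\D_\mathcal{T}$ are linearly dependent; hence the Gram matrix $\G^\mathcal{T} = \D_\mathcal{T}^T \D_\mathcal{T}$ is singular. As $\G^\mathcal{T}$ is positive semidefinite, its smallest eigenvalue is therefore equal to $0$.

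Next I would apply Lemma \ref{lemma:GershgorinConvolutional} to this support, which gives $\lambda_i(\G^\mathcal{T}) \geq 1 - (k-1)\mu(\D)$ for every eigenvalue. Using it for the zero eigenvalue yields $0 \geq 1 - (k-1)\mu(\D)$, i.e. $(k-1)\mu(\D) \geq 1$, and rearranging gives $k \geq 1 + 1/\mu(\D)$, which is precisely $\sigma_\infty(\D) \geq 1 + 1/\mu(\D)$.

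There is essentially no hard step here, because Lemma \ref{lemma:GershgorinConvolutional} has already done the real work of bounding the Gershgorin radii by the density of the worst stripe rather than by the global sparsity. The only points requiring a little care are bookkeeping: that the quantity $k$ appearing in the Lemma really is the Stripe-Spark once $\mathcal{T}$ is chosen as the support of the sparsest null-space vector, and that the vanishing of $\lambda_{\min}(\G^\mathcal{T})$ follows directly from $\Delt \in \ker \D$, so no combinatorial search over supports is needed. Note also that $k \geq 2$ automatically — a single nonzero coefficient cannot lie in the null space of $\D$ since the atoms have unit norm — so the factor $k-1$ is positive and dividing by $\mu(\D)$ is harmless.
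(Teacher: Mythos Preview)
Your proposal is correct and follows essentially the same route as the paper: pick a nonzero $\Delt$ in the null space of $\D$, observe that the Gram matrix $\G^{\mathcal{T}}$ of the restricted dictionary must be singular, invoke Lemma~\ref{lemma:GershgorinConvolutional} to get $1-(k-1)\mu(\D)\le 0$, and rearrange. The only cosmetic difference is that you start from a minimizer of the Stripe-Spark while the paper argues for an arbitrary null-space vector and then passes to the infimum; the content is identical.
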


\begin{proof}
	Let $\Delt$ be a vector such that $\Delt\neq\mathbf{0}$ and $\D\Delt=\mathbf{0}$.
	Note that we can write
	\begin{equation} \label{eq:relation}
	\D_\mathcal{T} \Delt_\mathcal{T} = \mathbf{0},
	\end{equation}
	where $\Delt_\mathcal{T}$ is the vector $\Delt$ restricted to its support $\mathcal{T}$, and $\D_\mathcal{T}$ is the dictionary composed of the corresponding atoms.
	Consider now the Gram matrix, $\G^{\mathcal{T}} = \D_\mathcal{T}^T \D_\mathcal{T}$, which corresponds to a portion extracted from the global Gram matrix $\D^T\D$. The relation in Equation \eqref{eq:relation} suggests that $\D_\mathcal{T}$ has a nullspace, which implies that its Gram matrix must have at least one eigenvalue equal to zero. Using Lemma \ref{lemma:GershgorinConvolutional}, the lower bound on the eigenvalues of $\G^{\mathcal{T}}$ is given by $1-(k-1)\mu(\D)$, where $k$ is the $\Loi$ norm of $\Delt$. Therefore, we must have that $1-(k-1)\mu(\D) \leq 0$, or equally $k \geq 1 + \frac{1}{\mu(\D)}$. We conclude that a vector $\Delt$, which is in the null-space of $\D$, must always have an $\Loi$ norm of at least $1+\frac{1}{\mu(\D)}$, and so the Stripe-Spark $\sigma_\infty$ is also bounded by this number.
\end{proof}

Using the above derived bound and the uniqueness based on the Stripe-Spark we can now formulate the following theorem:

\begin{thm}{(Uniqueness using mutual coherence):}
	Let $\D$ be a convolutional dictionary with mutual coherence $\mu(\D)$. If a solution $\Gama$ obeys $\|\Gama\|_{0,\infty}<\frac{1}{2}(1+\frac{1}{\mu(\D)})$, then this is necessarily the sparsest (in terms of $\Loi$ norm) solution to $\Poi$ with the signal $\D\Gama$.
\end{thm}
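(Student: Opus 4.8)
The plan is to combine the two previously established results in the obvious chain: Theorem on lower-bounding the Stripe-Spark gives $\sigma_\infty(\D) \geq 1 + \frac{1}{\mu(\D)}$, and the Uncertainty-and-uniqueness Theorem says that any $\Gama$ with $\|\Gama\|_{0,\infty} < \frac{1}{2}\sigma_\infty(\D)$ is the unique global minimizer of the $\Poi$ problem for the signal $\D\Gama$. So the whole argument is a one-line substitution.

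Concretely, I would proceed as follows. First, assume $\Gama$ satisfies $\|\Gama\|_{0,\infty} < \frac{1}{2}\left(1 + \frac{1}{\mu(\D)}\right)$. Second, invoke the lower bound on the Stripe-Spark to write $\frac{1}{2}\left(1 + \frac{1}{\mu(\D)}\right) \leq \frac{1}{2}\sigma_\infty(\D)$, hence $\|\Gama\|_{0,\infty} < \frac{1}{2}\sigma_\infty(\D)$. Third, apply the Uncertainty-and-uniqueness Theorem to conclude that $\Gama$ is the sparsest solution, in the $\Loi$ sense, to $\Poi$ for $\D\Gama$. That completes the proof.

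There is essentially no obstacle here; the result is a corollary, and the only thing to be mildly careful about is the direction of the inequality when chaining the two bounds — since the Stripe-Spark bound is a \emph{lower} bound, the threshold $\frac{1}{2}(1+1/\mu(\D))$ is no larger than $\frac{1}{2}\sigma_\infty(\D)$, so the strict inequality on $\|\Gama\|_{0,\infty}$ is preserved in the right direction. One should also note (implicitly, as in the classical theory) that this requires $\mu(\D) > 0$, which holds whenever $\D$ has more than one atom up to shifts; the degenerate orthogonal case $m=1$ is vacuous or handled trivially. I would write the proof in two or three sentences mirroring the classical argument in \cite{Elad_Book}.

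\begin{proof}
	Assume $\Gama$ obeys $\|\Gama\|_{0,\infty} < \frac{1}{2}\left(1 + \frac{1}{\mu(\D)}\right)$. By the previous theorem, the Stripe-Spark of $\D$ satisfies $\sigma_\infty(\D) \geq 1 + \frac{1}{\mu(\D)}$, and therefore
	\begin{equation}
	\|\Gama\|_{0,\infty} < \frac{1}{2}\left(1 + \frac{1}{\mu(\D)}\right) \leq \frac{1}{2}\sigma_\infty(\D).
	\end{equation}
	Invoking the uncertainty and uniqueness result based on the Stripe-Spark, any $\Gama$ satisfying $\|\Gama\|_{0,\infty} < \frac{1}{2}\sigma_\infty(\D)$ is necessarily the global optimum of the $\Poi$ problem for the signal $\D\Gama$. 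Hence $\Gama$ is the sparsest solution, in terms of the $\Loi$ norm, to $\Poi$ with the signal $\D\Gama$.
\end{proof}
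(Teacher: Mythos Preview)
Your proposal is correct and matches the paper's approach exactly: the paper does not even write out a separate proof for this theorem, stating only that it follows ``using the above derived bound and the uniqueness based on the Stripe-Spark,'' which is precisely the two-step chain you carry out.
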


At the end of Section \ref{Sec:Global2Local} we mentioned that for $m\gg 1$, the classical analysis would allow an order of $O(\sqrt{n})$ non-zeros all over the vector $\Gama$, regardless of the length of the signal $N$. In light of the above theorem, in the convolutional case, the very same quantity of non-zeros is allowed locally per stripe, implying that the overall number of non-zeros in $\Gama$ grows linearly with $N$.

\subsection{Recovery Guarantees for Pursuit Methods}
\label{Sec:BPnoiseless}

In this subsection, we attempt to solve the $\Poi$ problem by employing two common pursuit methods: the Orthogonal Matching Pursuit (OMP) and the Basis Pursuit (BP). Leaving aside the computational burdens of running such algorithms, which will be addressed in the second part of this work, we now consider the theoretical aspects of their success. Observe that in the coming discussion we use these two algorithms in their natural form, being oblivious to the $\Loi$ objective they are serving. Further work is required to develop OMP and BP versions that are aware of this specific goal, and thus may benefit from it.  

Previous works \cite{Donoho2003,Tropp2004} have shown that both OMP and BP succeed in finding the sparsest solution to the $P_0$ problem if the cardinality of the representation is known a priori to be lower than $\frac{1}{2}(1+\frac{1}{\mu(\D)})$. That is, we are guaranteed to recover the underlying solution as long as the \emph{global sparsity} is less than a certain threshold. In light of the discussion in Section \ref{GlobalVsLocal}, these values are pessimistic in the convolutional setting. By migrating from $P_0$ to the $\Poi$ problem, we show next that both algorithms are in fact capable of recovering the underlying solutions under far weaker assumptions.

\begin{thm}{(Global OMP recovery guarantee using $\Loi$ norm):} \label{thm:OMPSuccess}
	Given the system of linear equations $\X = \D\Gama$, if a solution $\Gama$ exists satisfying
	\begin{equation} \label{eq:Loi_condition}
	\|\Gama\|_{0,\infty} < \ \frac{1}{2}\left(1+\frac{1}{\mu(\D)}\right),
	\end{equation}
	then OMP is guaranteed to recover it.
\end{thm}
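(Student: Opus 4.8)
The plan is to mimic the classical OMP-recovery proof (as in \cite{Tropp2004,Elad_Book}) but to track the relevant quantities \emph{locally}, i.e.\ in terms of the $\Loi$ norm rather than the global $\ell_0$ norm. Let $\mathcal{T}$ denote the support of the assumed solution $\Gama$, and set $k=\|\Gama\|_{0,\infty}$, so that by hypothesis $k<\frac12\bigl(1+\tfrac1{\mu(\D)}\bigr)$. The proof is by induction on the OMP iteration: I would show that at every step the algorithm selects an atom whose index lies in $\mathcal{T}$, and that after at most $|\mathcal{T}|$ steps the residual vanishes, so OMP recovers exactly $\Gama$. The key point is that it suffices to control, at each iteration, the correlation of the current residual $\r$ with atoms inside versus outside $\mathcal{T}$: since $\r$ always lies in $\mathrm{span}(\D_\mathcal{T})$ (it equals $\D_\mathcal{T}\,\bm{\beta}$ for some coefficient vector supported on $\mathcal{T}$), the standard ``greedy selection ratio'' argument applies.

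The steps, in order, would be: (i) observe that since OMP starts with residual $\X=\D_\mathcal{T}\Delt_\mathcal{T}$ and every subsequent residual is obtained by orthogonal projection onto the span of already-chosen atoms, the residual is at all times of the form $\D_\mathcal{T}\bm{\beta}$ with $\mathrm{supp}(\bm{\beta})\subseteq\mathcal{T}$; (ii) bound $\min_{i\in\mathcal{T}}|\d_i^T\r|$ from below and $\max_{j\notin\mathcal{T}}|\d_j^T\r|$ from above, writing both inner products through the Gram matrix restricted to $\mathcal{T}$ (and, for the outside atoms, through $\d_j^T\D_\mathcal{T}$); (iii) invoke Lemma \ref{lemma:GershgorinConvolutional}, whose eigenvalue bounds $1\pm(k-1)\mu(\D)$ are exactly the quantities needed to bound $\|\G^\mathcal{T}\bm{\gamma} - \bm{\gamma}\|_\infty$ and related off-diagonal sums, to obtain that the greedy ratio $\max_{j\notin\mathcal{T}}|\d_j^T\r|\big/\max_{i\in\mathcal{T}}|\d_i^T\r|$ is strictly less than $1$ precisely when $k<\frac12\bigl(1+\tfrac1{\mu(\D)}\bigr)$; (iv) conclude that the next selected atom is in $\mathcal{T}$, update the support, and note that after the residual is re-orthogonalized the number of ``active'' coordinates of $\bm{\beta}$ strictly decreases, so the process terminates with the exact support and, by uniqueness of the least-squares solution on a linearly independent set (guaranteed since $k<\sigma_\infty(\D)$ by Theorem on lower bounding the Stripe-Spark, so $\D_\mathcal{T}$ has full column rank), the exact $\Gama$.

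The subtle point — and the one deserving the most care — is that in the convolutional setting the ``effective'' support size governing the Gerschgorin/coherence estimates is not $|\mathcal{T}|$ (which may be huge, growing with $N$) but only $k=\|\Gama\|_{0,\infty}$. The reason this works is structural: for any fixed atom $\d_j$, the product $\d_j^T\d_i$ vanishes unless $\d_i$ overlaps $\d_j$, i.e.\ unless $i$ lies in the stripe centered at $j$; hence each row of $\G^\mathcal{T}$ has at most $k$ nonzero off-diagonal entries, not $|\mathcal{T}|-1$ of them — this is exactly the localization already exploited in the proof of Lemma \ref{lemma:GershgorinConvolutional}. I would therefore be careful, in step (ii)–(iii), to phrase every bound as a bound on a single row sum of a Gram-type matrix and to cite the band structure of $\D$ so that only $k$ terms survive; the main obstacle is making sure that the quantity appearing when bounding the \emph{cross} term $\max_{j\notin\mathcal{T}}\|\d_j^T\D_\mathcal{T}\|_1$ is also controlled by $k\mu(\D)$ (each such $\d_j$ overlaps at most $k$ atoms of $\mathcal{T}$, since those overlapping atoms all lie in a single stripe and $\|\Gama\|_{0,\infty}=k$), rather than by the global cardinality. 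Once that localization is in place, the arithmetic reduces to the familiar inequality $\frac{k\mu}{1-(k-1)\mu}<1 \iff k<\frac12(1+\tfrac1\mu)$, and the theorem follows.
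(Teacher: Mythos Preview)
Your proposal is correct and follows essentially the same route as the paper: both arguments run the classical Tropp-style OMP analysis and the decisive step in each is the localization observation that for any fixed atom $\d_j$ the inner products $\d_j^T\d_t$ vanish outside the stripe containing $\d_j$, so every Gram row sum involves at most $k=\|\Gama\|_{0,\infty}$ terms rather than $|\mathcal{T}|$, yielding the familiar inequality $1-(k-1)\mu>k\mu$. One small caution: what you actually need in step~(iii) is the $\ell_\infty$ row-sum bound $(k-1)\mu(\D)$ from the \emph{proof} of Lemma~\ref{lemma:GershgorinConvolutional}, not the eigenvalue conclusion itself (eigenvalues control $\ell_2$, not $\ell_\infty$); the paper's proof does this row-sum estimate directly by picking the atom $\d_i$ with the largest coefficient and comparing $|\d_i^T\X|$ against $\max_{j\notin\mathcal{T}}|\d_j^T\X|$.
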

Note that if we assume $\|\Gama\|_{0,\infty} < \ \frac{1}{2}\left(1+\frac{1}{\mu(\D)}\right)$, according to our uniqueness theorem, the solution obtained by the OMP is the unique solution to the $\Poi$ problem and thus OMP finds the solution with the minimal $\Loi$ norm. Next, we claim that under the same conditions the BP algorithm is guaranteed to succeed as well. The proofs of these two theorems are presented in Appendix \ref{Sec:ProofsTheoremsNoiseless}.
\begin{thm}{(Global Basis Pursuit recovery guarantee using the $\Loi$ norm):} \label{thm:BPSuccess}
	For the system of linear equations $\D\Gama=\X$, if a solution $\Gama$ exists obeying
	\begin{equation}
		\|\Gama\|_{0,\infty}<\frac{1}{2}\left(1+\frac{1}{\mu(\D)}\right),
	\end{equation}
	then Basis Pursuit is guaranteed to recover it.
\end{thm}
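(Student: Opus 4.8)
The plan is to reproduce the classical mutual‑coherence analysis of Basis Pursuit, e.g.\ the Exact Recovery Condition of \cite{Tropp2004}, but to carry it out so that the only cardinality quantity entering the estimates is the \emph{local} one, $k:=\|\Gama\|_{0,\infty}$, never the global support size $|\mathcal{T}|$ of $\Gama$ (which may grow linearly with $N$). Concretely, it suffices to establish that (a) $\D_\mathcal{T}$ has full column rank, so that $\D_\mathcal{T}^{\dagger}=(\D_\mathcal{T}^T\D_\mathcal{T})^{-1}\D_\mathcal{T}^T$ is well defined, and (b) $\max_{j\notin\mathcal{T}}\|\D_\mathcal{T}^{\dagger}\d_j\|_1<1$; together these imply, via \cite{Tropp2004}, that $\Gama$ is the unique minimizer of $\|\Gama\|_1$ subject to $\D\Gama=\X$. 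The whole proof then amounts to deducing (a) and (b) from $k<\frac{1}{2}\left(1+\frac{1}{\mu(\D)}\right)$.

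Step (a) is immediate from Lemma \ref{lemma:GershgorinConvolutional}: since the support $\mathcal{T}$ has $\Loi$ norm $k$, the Gram matrix $\G^{\mathcal{T}}=\D_\mathcal{T}^T\D_\mathcal{T}$ has smallest eigenvalue at least $1-(k-1)\mu(\D)$, and the hypothesis on $k$ forces $(k-1)\mu(\D)<\frac{1}{2}$, so $\lambda_{\min}(\G^{\mathcal{T}})>\frac{1}{2}>0$. Hence $\G^{\mathcal{T}}$ is positive definite and $\D_\mathcal{T}$ has full column rank, regardless of how large $|\mathcal{T}|$ is.

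Step (b) is the heart of the argument, and where the convolutional geometry enters afresh. Write $\G^{\mathcal{T}}=\mathbf{I}+\mathbf{H}$ with $\mathbf{H}$ the zero‑diagonal Gram matrix, and extract two bounds from the banded, shift‑invariant structure of $\D$. First, the column of $\mathbf{H}$ indexed by an atom $\d_q$ with $q\in\mathcal{T}$ is supported on atoms of $\mathcal{T}$ overlapping $\d_q$; all of these lie in the single stripe of $\Gama$ centred at $q$, which contains at most $k$ of $\Gama$'s nonzeros, one of them $q$ itself, so the column has at most $k-1$ nonzeros, each of magnitude $\le\mu(\D)$. Hence $\|\mathbf{H}\|_{1\to1}\le(k-1)\mu(\D)<1$, a Neumann series converges, and $\|(\mathbf{I}+\mathbf{H})^{-1}\|_{1\to1}\le 1/(1-(k-1)\mu(\D))$. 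Second, for $j\notin\mathcal{T}$ the vector $\D_\mathcal{T}^T\d_j$ is supported on atoms of $\mathcal{T}$ overlapping $\d_j$, which again lie in one stripe and therefore number at most $k$, each inner product being $\le\mu(\D)$ in magnitude, so $\|\D_\mathcal{T}^T\d_j\|_1\le k\,\mu(\D)$. Combining,
\[ \|\D_\mathcal{T}^{\dagger}\d_j\|_1=\|(\mathbf{I}+\mathbf{H})^{-1}\D_\mathcal{T}^T\d_j\|_1\le\frac{k\,\mu(\D)}{1-(k-1)\mu(\D)}, \]
which is strictly below $1$ precisely when $(2k-1)\mu(\D)<1$, i.e.\ $k<\frac{1}{2}\left(1+\frac{1}{\mu(\D)}\right)$ — exactly the hypothesis. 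Invoking \cite{Tropp2004} then finishes the proof (the OMP statement, Theorem \ref{thm:OMPSuccess}, would be handled by the same localized bookkeeping on the greedy recursion).

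I expect the main obstacle to be replacing, in every estimate, the global support size $|\mathcal{T}|$ by the local quantity $k$. The classical bounds $\|(\D_\mathcal{T}^T\D_\mathcal{T})^{-1}\|_{1\to1}\le 1/(1-(|\mathcal{T}|-1)\mu(\D))$ and $\|\D_\mathcal{T}^T\d_j\|_1\le|\mathcal{T}|\,\mu(\D)$ are worthless here: they would only yield recovery under a \emph{global} sparsity bound $|\mathcal{T}|<\frac{1}{2}(1+1/\mu(\D))$, vacuous for long signals. What rescues the argument — the same phenomenon underlying Lemma \ref{lemma:GershgorinConvolutional} — is that although $\mathbf{H}$ and $\D_\mathcal{T}^T\d_j$ are objects of global size, banding forces each of their columns to "see" only one stripe's worth of atoms of $\mathcal{T}$, so their $1\to1$ norms are governed by $k$ alone. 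A secondary, purely combinatorial point to get right is that the positions whose atoms can overlap a fixed $\d_j$ form exactly one stripe (of $2n-1$ local vectors), so the relevant count is a single $\|\gama_i\|_0\le\|\Gama\|_{0,\infty}$ rather than twice that.
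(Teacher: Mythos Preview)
Your argument is correct, but it is \emph{not} the route the paper takes. The paper proves Theorem~\ref{thm:BPSuccess} by a direct null--space contradiction in the style of Donoho--Elad: one posits an alternative $\ell_1$ minimizer $\hat{\Gama}$, sets $\Delt=\hat{\Gama}-\Gama$, and from $\D\Delt=\mathbf{0}$ together with the banded structure derives the entrywise inequality $|\Delta_i|\le\frac{\mu(\D)}{1+\mu(\D)}\|\delt_{p(i)}\|_1$. Summing these over the support of $\Gama$ and performing a combinatorial swap of the order of summation (the banded--matrix trick) yields $\mathds{1}_{\mathcal{T}(\Gama)}^T|\Delt|\le\frac{\mu(\D)}{1+\mu(\D)}\,\|\Gama\|_{0,\infty}\,\|\Delt\|_1$, which combined with $\|\Delt\|_1\le 2\,\mathds{1}_{\mathcal{T}(\Gama)}^T|\Delt|$ forces the bad set to be empty.

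You instead go through Tropp's Exact Recovery Condition $\max_{j\notin\mathcal{T}}\|\D_\mathcal{T}^{\dagger}\d_j\|_1<1$, bounding the pseudoinverse by a Neumann series on $\G^{\mathcal{T}}=\mathbf{I}+\mathbf{H}$ and observing that bandedness caps each column of $\mathbf{H}$ at $(k-1)\mu(\D)$ and $\|\D_\mathcal{T}^T\d_j\|_1$ at $k\mu(\D)$. Both approaches exploit the same structural fact --- that every column of the relevant Gram objects ``sees'' only one stripe --- but package it differently. Your route is shorter and more modular, reusing Lemma~\ref{lemma:GershgorinConvolutional} and an off--the--shelf ERC; the paper's route is fully self--contained and, because it works directly with sums over stripes, extends more transparently to the stripe--coherence refinements of Section~\ref{Sec:ShiftedMutualCoherence}, where the per--shift coherences $\mu_s$ must be tracked through the summation rather than collapsed into a single operator--norm bound.
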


Before moving on, we would like to highlight again the implications of the aforementioned claims. The recovery guarantees for both pursuit methods have now become \emph{independent of the global signal dimension and sparsity}. Instead, the condition for success is given in terms of the \emph{local} concentration of non-zeros of the global sparse vector. Moreover, the number of non-zeros permitted per stripe under the current bounds is in fact the same number previously allowed globally.

\subsection{Experiments}
In this subsection we intend to provide numerical results that corroborate the above presented theoretical bounds. While doing so, we will shed light on the performance of the OMP and BP algorithms in practice, as compared to our previous analysis.

In \cite{Soltanalian2014} an algorithm was proposed to construct a local dictionary such that all its aperiodic auto-correlations and cross-correlations are low. This, in our context, means that the algorithm attempts to minimize the mutual coherence of the dictionary $\D_L$ and all of its shifts, decreasing the global mutual coherence as a result. We use this algorithm to numerically build a dictionary consisting of two atoms ($m=2$) with patch size $n=64$. The theoretical lower bound on the $\mu(\D)$ presented in Equation \eqref{eq:mu_bound} under this setting is approximately $0.063$, and we manage to obtain a mutual coherence of $0.09$ using the aforementioned method. With these atoms we construct a convolutional dictionary with global atoms of length $N = 640$.

Once the dictionary is fixed, we generate sparse vectors with random supports of (global) cardinalities in the range $\left[1,300\right]$. The non-zero entries are drawn from random independent and identically-distributed Gaussians with mean equal to zero and variance equal to one. Given these sparse vectors, we compute their corresponding global signals and attempt to recover them using the global OMP and BP. We perform $500$ experiments per each cardinality and present the probability of success as a function of the representation's $\Loi$ norm. We define the success of the algorithm as the full recovery of the true sparse vector. The results for the experiment are presented in Figure \ref{fig:phase_transition_both}. The theorems provided in the previous subsection guarantee the success of both OMP and BP as long as the $\|\Gama \|_{0,\infty} \leq 6$, as  $\frac{1}{2}\left(1+\frac{1}{\mu(\D)}\right) \approx 6$. 
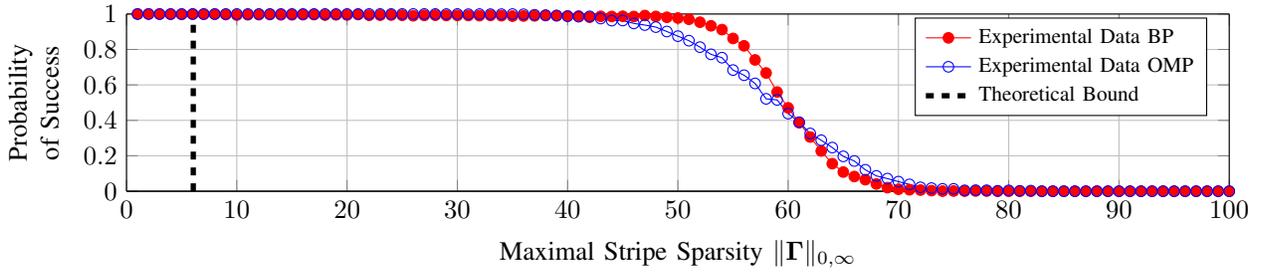
\begin{figure*}[t]
	\vspace{-.5cm}
	\centering
	\setlength\figureheight{0.13\textwidth}
	\setlength\figurewidth{0.85\textwidth}
%
%
\begin{tikzpicture}

\begin{axis}[%
width=0.951\figurewidth,
height=\figureheight,
at={(0\figurewidth,0\figureheight)},
scale only axis,
separate axis lines,
every outer x axis line/.append style={black},
every x tick label/.append style={font=\color{black}},
xmin=0,
xmax=100,
xlabel={Maximal Stripe Sparsity $\|\Gama \|_{0,\infty}$},
xmajorgrids,
every outer y axis line/.append style={black},
every y tick label/.append style={font=\color{black}},
ymin=0,
ymax=1,
ylabel={Probability of Success},
ylabel style={align=center,text width=3cm},
ymajorgrids,
axis background/.style={fill=white},
legend style={legend cell align=left,align=left,draw=black}
]

\addplot[color=red,solid,mark=*,mark options={solid}] plot table[row sep=crcr] {%
1	1\\
2	1\\
3	0.999250374812594\\
4	0.998520710059172\\
5	0.997893258426966\\
6	0.999328408327737\\
7	0.998756991920448\\
8	0.998751560549313\\
9	0.997043169722058\\
10	0.994051160023795\\
11	0.99578567128236\\
12	0.996468510888758\\
13	0.997116493656286\\
14	0.995505617977528\\
15	0.995614035087719\\
16	0.996095928611266\\
17	0.996694214876033\\
18	0.994114499732477\\
19	0.995565410199557\\
20	0.994612068965517\\
21	0.997251236943375\\
22	0.992972972972973\\
23	0.99354491662184\\
24	0.994324045407637\\
25	0.993036957686127\\
26	0.990231362467866\\
27	0.99238578680203\\
28	0.993407707910751\\
29	0.991979949874687\\
30	0.992150706436421\\
31	0.991778006166495\\
32	0.990950226244344\\
33	0.992401215805471\\
34	0.990673575129534\\
35	0.988174807197943\\
36	0.988939165409754\\
37	0.989717223650386\\
38	0.991517436380773\\
39	0.99000999000999\\
40	0.988977955911824\\
41	0.988301119023398\\
42	0.982073643410853\\
43	0.986973947895792\\
44	0.986124876114965\\
45	0.987317073170732\\
46	0.985983566940551\\
47	0.991546494281452\\
48	0.986342943854325\\
49	0.981629769194536\\
50	0.97735460627895\\
51	0.970366886171214\\
52	0.953748782862707\\
53	0.9335232668566\\
54	0.912030432715169\\
55	0.862102217936355\\
56	0.82055063913471\\
57	0.741045498547919\\
58	0.667281956622058\\
59	0.559924206537186\\
60	0.470616570327553\\
61	0.386941910705713\\
62	0.305853658536585\\
63	0.226923076923077\\
64	0.155466399197593\\
65	0.108235294117647\\
66	0.0830564784053156\\
67	0.0652372262773723\\
68	0.041018387553041\\
69	0.0202764976958525\\
70	0.0111867704280156\\
71	0.00831024930747922\\
72	0.00648748841519926\\
73	0.0034330554193232\\
74	0.000525762355415352\\
75	0.00113250283125708\\
76	0.000628930817610063\\
77	0.00151285930408472\\
78	0.000956022944550669\\
79	0\\
80	0\\
81	0\\
82	0\\
83	0\\
84	0\\
85	0\\
86	0\\
87	0\\
88	0\\
89	0\\
90	0\\
91	0\\
92	0\\
93	0\\
94	0\\
95	0\\
96	0\\
97	0\\
98	0\\
99	0\\
100	0\\
};
\addlegendentry{\footnotesize{Experimental Data BP}};

\addplot[color=blue,solid,mark=o,mark options={solid}] plot table[row sep=crcr] {%
1	1\\
2	1\\
3	1\\
4	1\\
5	1\\
6	1\\
7	1\\
8	1\\
9	1\\
10	1\\
11	1\\
12	1\\
13	1\\
14	1\\
15	1\\
16	1\\
17	1\\
18	1\\
19	1\\
20	1\\
21	1\\
22	1\\
23	1\\
24	1\\
25	1\\
26	1\\
27	1\\
28	1\\
29	0.999478079331942\\
30	1\\
31	0.99897066392177\\
32	0.999479708636837\\
33	0.999003984063745\\
34	0.998980112187659\\
35	0.999491353001017\\
36	0.998481781376518\\
37	0.995529061102832\\
38	0.993496748374187\\
39	0.991826923076923\\
40	0.987593052109181\\
41	0.985074626865672\\
42	0.986875315497224\\
43	0.974531475252283\\
44	0.962776659959759\\
45	0.963321446765155\\
46	0.947072599531616\\
47	0.937806072477963\\
48	0.926133469179827\\
49	0.901277013752456\\
50	0.875121477162293\\
51	0.84952380952381\\
52	0.813233223838573\\
53	0.772528007793473\\
54	0.753664302600473\\
55	0.683736367946894\\
56	0.654767726161369\\
57	0.608381502890173\\
58	0.52144578313253\\
59	0.515503875968992\\
60	0.438036224976168\\
61	0.389518413597734\\
62	0.325969563082965\\
63	0.287747839349263\\
64	0.246468582562104\\
65	0.196759259259259\\
66	0.170582706766917\\
67	0.120941176470588\\
68	0.0865518845974872\\
69	0.071360153256705\\
70	0.0542778288868445\\
71	0.0386847195357834\\
72	0.0219728845254792\\
73	0.0175867122618466\\
74	0.0132382892057026\\
75	0.013172338090011\\
76	0.0051150895140665\\
77	0.00593471810089021\\
78	0.00552995391705069\\
79	0.00238095238095238\\
80	0.00470957613814757\\
81	0.00210526315789474\\
82	0.00335570469798658\\
83	0\\
84	0\\
85	0\\
86	0\\
87	0\\
88	0\\
89	0\\
90	0\\
91	0\\
92	0\\
93	0\\
94	0\\
95	0\\
96	0\\
97	0\\
98	0\\
99	0\\
100	0\\
};
\addlegendentry{\footnotesize{Experimental Data OMP}};

\addplot [color=black,solid,forget plot]
  table[row sep=crcr]{%
0	0\\
100	0\\
};
\addplot [color=black,dashed,line width=2.0pt]
  table[row sep=crcr]{%
6.05373654502998	0\\
6.05373654502998	1\\
};
\addlegendentry{\footnotesize{Theoretical Bound}};

\end{axis}
\end{tikzpicture}%
	\caption{Probability of success of OMP and BP at recovering the true convolutional sparse code. The theoretical guarantee is presented on the same graph.}
	\label{fig:phase_transition_both}
	\vspace{-0.3cm}
\end{figure*}

As can be seen from these results, the theoretical bound is far from being tight. However, in the traditional sparse representation model the corresponding bounds have the same loose flavor \cite{Bruckstein2009}. This kind of results is in fact expected when using such a worst-case analysis. Tighter bounds could likely be obtained by a probabilistic study, which we leave for future work.


\section{Shifted Mutual Coherence and Stripe Coherence}
\label{Sec:ShiftedMutualCoherence}
When considering the mutual coherence $\mu(\D)$, one needs to look at the maximal correlation between every pair of atoms in the global dictionary. One should note, however, that atoms having a non-zero correlation must have overlapping supports. As we see, $\mu(\D)$ provides a bound for these values independently of the amount of overlap. 
One could go beyond this characterization of the convolutional dictionary by a single value and propose to bound all the inner products between atoms for a \emph{given shift}. In this section we briefly explore this direction of analysis, introducing new tools for this model and addressing the theoretical consequences that they convey. We will only present the main points of these results here for the sake of brevity; the interested reader can find a more detailed discussion on this matter in Appendix \ref{App:MoreOnShiftedMu}.

Recall that $\O$ is defined as a stripe extracted from the global dictionary $\D$, as explained in Section \ref{Sec:Conv_Model}. Consider the sub-system given by $\x_i=\O\gama_i$, corresponding to the $i^{th}$ patch in $\X$. Note that $\O$ can be split into a set of $2n-1$ blocks of size $n\times m$, where each block is denoted by $\O_s$, i.e., 
\begin{equation}
\O=[\O_{-n+1},\dots,\O_{-1},\O_0,\O_1,\dots,\O_{n-1}],
\end{equation}
as shown previously in Figure \ref{PartialStripe}. Similarly, $\gama_i$ can be split into a set of $2n-1$ vectors of length $m$, each denoted by $\gama_{i,s}$ and corresponding to $\O_s$. In other words, $\gama_i=[\gama_{i,-n+1}^T,\dots,\gama_{i,-1}^T,\gama_{i,0}^T,\gama_{i,1}^T,\dots,\gama_{i,n-1}^T]^T$. Note that previously we denoted local sparse vectors of length $m$ by $\alfa_j$. Yet, we will also denote them by $\gama_{i,s}$ in order to emphasize the fact that they correspond to the $s^{th}$ shift inside $\gama_i$. Denote the number of non-zeros in $\gama_i$ as $n_i$. We can also write $n_i=\displaystyle\sum_{s=-n+1}^{n-1}n_{i,s}$,
where $n_{i,s}$ is the number of non-zeros in each $\gama_{i,s}$.

\begin{defn}
Define the shifted mutual coherence $\mu_s$ by
\begin{equation}
\mu_s = \underset{i,j}{\max} \quad |\langle \d^0_i,\d^s_j  \rangle|,
\end{equation}
where $\d^0_i$ is a column extracted from $\O_0$, $\d^s_j$ is extracted from $\O_s$, and we require\footnote{The condition $i\neq j$ if $s=0$ is necessary so as to avoid the inner product of an atom by itself.} that $i\neq j$ if $s=0$.
\end{defn}

The above definition can be seen as a generalization of the mutual coherence for the shift-invariant local model presented in Section \ref{Sec:Conv_Model}. Indeed, $\mu_s$ characterizes $\O$ just as $\mu(\D)$ characterizes the coherence of a general dictionary. Note that if $s=0$ the above definition boils down to the traditional mutual coherence of $\D_L$, i.e., $\mu_0=\mu(\D_L)$. It is important to stress that the atoms used in the above definition {\em are normalized globally} according to $\D$ and not $\O$. In Appendix \ref{App:MoreOnShiftedMu} we comment on several interesting properties of this measure.

Following the definition of the shifted mutual coherence, for a given stripe $i$ from the linear system of equations $\X=\D\Gama$, we can formulate a new measure: 

\begin{defn}
The stripe coherence is defined as
\begin{equation}
\zeta(\gama_i) = \sum_{s=-n+1}^{n-1} n_{i,s}\ \mu_s.
\end{equation}
\end{defn}
\noindent
According to this definition, each stripe has a coherence given by the sum of its non-zeros weighted by the shifted mutual coherence. As a particular case, if all $k$ non-zeros correspond to atoms in the center sub-dictionary, $\D_L$, this becomes $\mu_0k$. Note that unlike the traditional mutual coherence, this new measure depends on the location of the non-zeros in $\Gama$ -- it is a function of the support of the sparse vector, and not just of the dictionary. As such, it characterizes the correlation between the atoms participating in a given stripe. In what follows, we will use the notation $\zeta_i$ for $\zeta(\gama_i)$.

Next, we will present results based on these measures. Although these theorems are generally sharper, they are harder to grasp. We begin with a recovery guarantee for the OMP and BP algorithms, followed by a discussion on their implications.

\begin{thm}{(Global OMP recovery guarantee using the stripe coherence):}
	\label{thm:OMPSuccess_stripeCoherence}
	Given the system of linear equations $\X = \D\Gama$, if a solution $\Gama$ exists satisfying \begin{equation} \label{eq:stripe_coherence_condition3}
	\max_i\ \zeta_i = \max_i\sum_{s=-n+1}^{n-1} n_{i,s} \mu_s < \ \frac{1}{2}\left(1+\mu_0\right),
	\end{equation}
	then OMP is guaranteed to recover it.
\end{thm}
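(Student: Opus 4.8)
The plan is to adapt the classical mutual–coherence analysis of OMP (in the spirit of \cite{Tropp2004} and of Theorem \ref{thm:OMPSuccess}), replacing every appearance of a global coherence bound by a \emph{localized} one assembled from the shifted coherences $\mu_s$. Let $\mathcal{T}$ be the support of $\Gama$. I will prove by induction on the iteration that, as long as the current residual $\r_t$ is nonzero, OMP selects an atom indexed in $\mathcal{T}$. Since all previously chosen atoms lie in $\mathcal{T}$, the residual lies in the column span of $\D_\mathcal{T}$, so I may write $\r_t=\D_\mathcal{T}\z$ for a coefficient vector $\z$ supported on $\mathcal{T}$ (at the start $\r_0=\X=\D_\mathcal{T}\Gama_\mathcal{T}$). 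As usual it suffices to establish the greedy–selection inequality
\begin{equation}
\max_{p\notin\mathcal{T}}|\d_p^T\r_t| \ < \ \max_{q\in\mathcal{T}}|\d_q^T\r_t| ,
\end{equation}
which forces the next chosen index into $\mathcal{T}$; iterating until the residual vanishes then recovers exactly $\mathcal{T}$, and the (to-be-verified) full column rank of $\D_\mathcal{T}$ identifies $\Gama$ uniquely.

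To bound the left-hand side, fix $p\notin\mathcal{T}$ at spatial location $\ell$. The only atoms of $\D_\mathcal{T}$ with $\d_p^T\d_j\neq 0$ are those whose supports overlap that of $\d_p$, i.e.\ exactly the atoms $\d_j$ indexed within the stripe $\gama_\ell$ centered at $\ell$; moreover such an atom sitting at relative shift $s$ from $\d_p$ satisfies $|\d_p^T\d_j|\le\mu_s$ by shift invariance and the definition of $\mu_s$. Counting these atoms shift by shift gives $\sum_{j\in\mathcal{T}}|\d_p^T\d_j|\le\sum_{s=-n+1}^{n-1}n_{\ell,s}\,\mu_s=\zeta_\ell\le\max_i\zeta_i$, whence $|\d_p^T\r_t|\le\|\z\|_\infty\max_i\zeta_i$.

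For the right-hand side, note $\d_q^T\r_t=(\G^\mathcal{T}\z)_q$, so $\max_{q\in\mathcal{T}}|\d_q^T\r_t|=\|\G^\mathcal{T}\z\|_\infty\ge\|\z\|_\infty-\|(\G^\mathcal{T}-\mathbf{I})\z\|_\infty$. By the same localization argument, the off-diagonal row sum of $\G^\mathcal{T}$ at a row indexed by an atom $\d_q$ at location $\ell'$ is at most $\big(\sum_{s}n_{\ell',s}\mu_s\big)-\mu_0=\zeta_{\ell'}-\mu_0\le\max_i\zeta_i-\mu_0$, where subtracting $\mu_0$ removes the self term on the diagonal. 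Hence $\max_{q\in\mathcal{T}}|\d_q^T\r_t|\ge\|\z\|_\infty\big(1-\max_i\zeta_i+\mu_0\big)$. Combining the two estimates, the greedy ratio is at most $\max_i\zeta_i/(1-\max_i\zeta_i+\mu_0)$, which is strictly below $1$ precisely when $\max_i\zeta_i<\tfrac12(1+\mu_0)$, the hypothesis \eqref{eq:stripe_coherence_condition3}. The same inequality makes $1-\max_i\zeta_i+\mu_0>0$, so that $\G^\mathcal{T}\z=\mathbf{0}\Rightarrow\z=\mathbf{0}$; this guarantees $\D_\mathcal{T}$ has full column rank, which both legitimizes the induction step ($\r_t\neq\mathbf{0}\Rightarrow\z\neq\mathbf{0}$) and shows OMP terminates after $|\mathcal{T}|$ steps with zero residual and support exactly $\mathcal{T}$.

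The step I expect to be the main obstacle is the localization bookkeeping: making fully rigorous that the true-support atoms overlapping a fixed atom placed at location $\ell$ are enumerated, shift by shift, exactly by the counts $n_{\ell,s}$ that build $\zeta_\ell$ — with consistent sign/block conventions so that the $s$ in $|\d_p^T\d_j|\le\mu_s$ matches the $s$ appearing in $\zeta_\ell$, and with the diagonal $-\mu_0$ correction applied only in the denominator bound (not in the out-of-support bound, since $p\neq j$ there automatically). Everything else is the standard OMP induction.
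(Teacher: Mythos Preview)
Your proposal is correct and follows essentially the same approach as the paper: decompose inner products shift-by-shift, bound by $\mu_s$, assemble into $\zeta_\ell$, subtract the $\mu_0$ self-term only on the in-support side, and reduce the greedy-selection inequality to $\max_i\zeta_i<\tfrac12(1+\mu_0)$. The only cosmetic difference is that the paper argues via the largest coefficient $|\Gamma_i|$ and invokes Theorem~\ref{Thm:max_stripe_coherence_contained} for the induction step, whereas your Gram-matrix $\ell_\infty$ formulation works uniformly for any $\z$ supported on $\mathcal{T}$ and so absorbs that induction automatically; the localization bookkeeping you flag as the main obstacle is handled in the paper exactly as you describe it.
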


\begin{thm}{(Global BP recovery guarantee using the stripe coherence):}
	Given the system of linear equations $\X = \D\Gama$, if a solution $\Gama$ exists satisfying \begin{equation}
	\max_i\ \zeta_i = \max_i\sum_{s=-n+1}^{n-1} n_{i,s} \mu_s < \ \frac{1}{2}\left(1+\mu_0\right),
	\end{equation}
	then Basis Pursuit is guaranteed to recover it.
\end{thm}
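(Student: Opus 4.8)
The plan is to mimic the classical Tropp-style analysis of Basis Pursuit via the Exact Recovery Condition (ERC), but replace all global cardinality arguments by local (stripe-wise) counting so that the relevant quantity becomes $\zeta_i$ rather than $\|\Gama\|_0\,\mu$. Let $\mathcal{T}$ be the support of the candidate solution $\Gama$, with $\Gama_\mathcal{T}$ the nonzero coefficients and $\D_\mathcal{T}$ the corresponding atoms. The standard sufficient condition for BP to recover $\Gama$ is that $\D_\mathcal{T}$ has full column rank and the ERC holds: $\max_{j\notin\mathcal{T}}\|\D_\mathcal{T}^\dagger \d_j\|_1 < 1$. So the first step is to establish invertibility of $\G^\mathcal{T}=\D_\mathcal{T}^T\D_\mathcal{T}$; here I would use a refinement of Lemma~\ref{lemma:GershgorinConvolutional}, but with the Gershgorin radii bounded by $\zeta_i - \mu_0$ instead of $(k-1)\mu(\D)$. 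Concretely, for any atom $\d_j$ in the support, the sum of $|\d_j^T\d_i|$ over the other support atoms $i$ is bounded by $\sum_{s}n_{i(j),s}\mu_s$ where the sum runs over the stripe containing $\d_j$; this is at most $\max_i\zeta_i$, and after removing the diagonal-neighbour contribution we get a bound of $\max_i\zeta_i-\mu_0$. Hence every eigenvalue of $\G^\mathcal{T}$ lies in $[1-(\max_i\zeta_i-\mu_0),\,1+(\max_i\zeta_i-\mu_0)]$, which under \eqref{eq:stripe_coherence_condition3} is strictly positive since $\max_i\zeta_i - \mu_0 < \tfrac12(1-\mu_0) < 1$.

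The second step is to bound $\|\D_\mathcal{T}^\dagger\d_j\|_1$ for a fixed $j\notin\mathcal{T}$. Write $\D_\mathcal{T}^\dagger\d_j = (\G^\mathcal{T})^{-1}\D_\mathcal{T}^T\d_j$. I would split this as $\|(\G^\mathcal{T})^{-1}\D_\mathcal{T}^T\d_j\|_1 \le \|(\G^\mathcal{T})^{-1}\|_{1\to1}\cdot\|\D_\mathcal{T}^T\d_j\|_1$. For the first factor, $(\G^\mathcal{T})^{-1} = (I - (I-\G^\mathcal{T}))^{-1} = \sum_{\ell\ge0}(I-\G^\mathcal{T})^\ell$, and $\|I-\G^\mathcal{T}\|_{1\to1}$ is the maximal absolute column (equivalently row, by symmetry) sum of off-diagonal entries, which by the same stripe-counting argument is at most $\max_i\zeta_i - \mu_0$; hence $\|(\G^\mathcal{T})^{-1}\|_{1\to1} \le \frac{1}{1-(\max_i\zeta_i-\mu_0)}$. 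For the second factor, $\|\D_\mathcal{T}^T\d_j\|_1 = \sum_{i\in\mathcal{T}}|\d_i^T\d_j|$; since $\d_j$ overlaps only atoms in one stripe window around it, this sum is again bounded by a stripe coherence-type quantity. The crucial point — and this is where I expect the argument to need care — is that the atom $\d_j$ is \emph{outside} the support, so the relevant count is over at most the same stripe but now we keep the term that would be the "diagonal" analogue; I expect the clean bound here to be $\|\D_\mathcal{T}^T\d_j\|_1 \le \max_i\zeta_i$ (the full stripe coherence, including the $\mu_0$ term, because $j$ itself is not in $\mathcal{T}$ and contributes nothing to be subtracted, while its nearest support neighbour contributes $\mu_0$).

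Combining, $\max_{j\notin\mathcal{T}}\|\D_\mathcal{T}^\dagger\d_j\|_1 \le \dfrac{\max_i\zeta_i}{1-(\max_i\zeta_i-\mu_0)}$, and requiring this to be $<1$ rearranges to $\max_i\zeta_i + \max_i\zeta_i - \mu_0 < 1$, i.e., $2\max_i\zeta_i < 1+\mu_0$, which is exactly the hypothesis \eqref{eq:stripe_coherence_condition3}. By the ERC (which I may invoke from \cite{Tropp2004} or re-derive via the standard primal-dual / KKT certificate argument: construct a dual vector $\v$ with $\D_\mathcal{T}^T\v = \mathrm{sign}(\Gama_\mathcal{T})$ and $\|\D_{\mathcal{T}^c}^T\v\|_\infty < 1$, taking $\v = \D_\mathcal{T}(\G^\mathcal{T})^{-1}\mathrm{sign}(\Gama_\mathcal{T})$), BP has $\Gama$ as its unique minimizer. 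The main obstacle I anticipate is getting the stripe-counting bookkeeping exactly right in both the $\|I-\G^\mathcal{T}\|_{1\to1}$ estimate and the $\|\D_\mathcal{T}^T\d_j\|_1$ estimate — in particular making sure that the former genuinely loses a $\mu_0$ (because the diagonal is excluded) while the latter does not, since this asymmetry is precisely what produces the $\tfrac12(1+\mu_0)$ threshold rather than a weaker one; this is entirely analogous to why the classical bound is $\tfrac12(1+1/\mu)$ and not $\tfrac12/\mu$, but here one must verify it holds shift-by-shift rather than uniformly.
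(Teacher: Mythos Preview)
Your argument is correct, and it reaches the same threshold $\max_i\zeta_i < \tfrac12(1+\mu_0)$, but it follows a genuinely different route from the paper's. The paper does not spell out this proof in full; it states that the argument parallels its proof of the $\Loi$-based BP guarantee (Theorem~\ref{thm:BPSuccess}) with a more delicate analysis. That proof is a null-space argument in the style of Donoho--Elad: one defines the set of alternative feasible vectors $\hat\Gama$ with $\|\hat\Gama\|_1\le\|\Gama\|_1$, passes to $\Delt=\hat\Gama-\Gama$, uses $\D^T\D\Delt=\mathbf{0}$ to derive an entrywise bound on $|\Delta_i|$ in terms of the $\ell_1$ mass of the surrounding stripe of $\Delt$, and then performs a combinatorial swap of the order of summation (the banded-matrix trick) to convert a sum over local $\ell_0$ blocks of $\Gama$ times stripe-$\ell_1$ blocks of $\Delt$ into a sum over stripe-$\ell_0$ blocks of $\Gama$ times local-$\ell_1$ blocks of $\Delt$. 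In the stripe-coherence version these steps would carry the shift-dependent weights $\mu_s$ through, and the subtraction of $\mu_0$ appears when isolating the diagonal term of $\D^T\D$.

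Your ERC/Tropp approach bypasses the sum-swapping entirely: the two stripe-counting estimates you identify --- $\|I-\G^\mathcal{T}\|_{1\to1}\le\max_i\zeta_i-\mu_0$ (diagonal excluded, since the corresponding support atom sits in the $s=0$ block) and $\|\D_\mathcal{T}^T\d_j\|_1\le\max_i\zeta_i$ for $j\notin\mathcal{T}$ (nothing excluded) --- plug directly into the Neumann series bound and give the result in one line. This is more modular and makes the origin of the asymmetric $\mu_0$ term transparent. The paper's route, on the other hand, avoids invoking the ERC as a black box and shows explicitly that no competing $\ell_1$-minimizer exists, which keeps the argument self-contained and closer in spirit to its uniqueness analysis via the Stripe-Spark.
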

\noindent The corresponding proofs are similar to their counterparts presented in the preceding section but require a more delicate analysis; one of them is thoroughly discussed in Appendix \ref{App:OMPproofStripeCoherence}. 

In order to provide an intuitive interpretation for these results, the above bounds can be tied to a concrete number of non-zeros per stripe. First, notice that requiring the maximal stripe coherence to be less than a certain threshold is equal to requiring the same for every stripe:
\begin{equation}
\forall i\quad\sum_{s=-n+1}^{n-1} n_{i,s} \mu_s< \ \frac{1}{2}\left(1+\mu_0\right).
\end{equation}
Multiplying and dividing the left-hand side of the above inequality by $n_i$ and rearranging the resulting expression, we obtain
\begin{equation}
\forall i \quad n_i< \ \frac{1}{2}\frac{1+\mu_0}{\sum_{s=-n+1}^{n-1} \frac{n_{i,s}}{n_i} \mu_s }.
\end{equation}
Define $\bar{\mu}_i=\sum_{s=-n+1}^{n-1}\frac{n_{i,s}}{n_i} \mu_s$. Recall that $\sum_{s=-n+1}^{n-1} \frac{n_{i,s}}{n_i}=1$ and as such $\bar{\mu}_i$ is simply the (weighted) average shifted mutual coherence in the $i^{th}$ stripe. Putting this definition into the above condition, the inequality becomes
\begin{equation}
\forall i\quad n_i< \frac{1}{2}\left(\frac{1}{\bar{\mu}_i}+\frac{\mu_0}{\bar{\mu}_i}\right).
\end{equation}
Thus, the condition in \eqref{eq:stripe_coherence_condition3} boils down to requiring the sparsity of all stripes to be less than a certain number. Naturally, this inequality resembles the one presented in the previous section for the OMP and BP guarantees. The reader might wonder about how they are related. In Appendix \ref{App:MoreOnShiftedMu} we prove that under the assumption that $\mu(\D)=\mu_0$, the shifted mutual coherence condition is at least as strong as the original one. 

As a final note, the shifted mutual coherence, $\mu_s$, is a considerably more informative measure than the standard mutual coherence. In some applications, the signals created by the convolutional dictionary are built of atoms which are known a priori to be separated by some minimal lag, or shift. In radio communications, for example, such a situation appears when there exists a minimal time between consecutive transmissions \cite{He2009}. In these cases, knowing how the correlation between the atoms depends on their shifts is fundamental for the design of the dictionary and its utilization.

\section{Conclusion and Future Work}
\label{Sec:Conclusions}

In the first part of this work we have presented a formal analysis of the convolutional sparse representation model. In doing so, we have reformulated the objective of the global pursuit, introducing the $\Loi$ norm and the corresponding $\Poi$ problem, and proven the uniqueness of its solution. By migrating from the $P_0$ to the $\Poi$ problem, we were able to provide meaningful guarantees for the success of popular algorithms in the noiseless case, improving on traditional bounds which were shown to be very pessimistic under the convolutional case. In order to achieve such results, we have generalized a series of concepts such as Spark and the mutual coherence to their counterparts in the convolutional setting. 

One of the cardinal motivations for this work was a series of recent practical methods addressing the convolutional sparse coding problem; and in particular, the need for their theoretical foundation. However, our results are as of yet not directly applicable to these, as we have restricted our analysis to the ideal case of noiseless signals. The natural extension to this work is therefore the study of signals under noise contamination and model imperfections. This is indeed the path we undertake in part II of our work, exploring the question of whether the convolutional model remains stable in the presence of noise. Moreover, we show how to decompose and solve the global pursuit by performing merely local operations. This will tie the algorithmic solutions for the convolutional model to patch-based methods, which are the current practice in state-of-the-art signal and image restoration.



\section{Acknowledgements}
The research leading to these results has received funding from the European Research Council under European Union’s Seventh Framework Programme, ERC Grant agreement no. 320649. The authors would like to thank Dmitry Batenkov, Yaniv Romano and Raja Giryes for the prolific conversations and most useful advice which helped shape this work.

\appendices
\renewcommand{\theequation}{A-\arabic{equation}}
\setcounter{equation}{0}  
\vspace{-0.2cm} 
\section{Triangle Inequality for the $\Loi$ Norm} 
\label{sect:TraingIneqLoi}
\begin{thm}
The triangle inequality holds for the $\Loi$ norm.
\end{thm}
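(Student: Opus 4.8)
The plan is to reduce the claim to two elementary facts: the triangle inequality for the ordinary $\ell_0$ ``norm'' applied stripe-by-stripe, and the subadditivity of the $\max$ operator. The starting observation is that the stripe-extraction map $\gama_i = \S_i\Gama$ is \emph{linear}, so for any two global vectors $\Gama,\Delt\in\mathbb{R}^{mN}$ the $i^{th}$ stripe of their sum is exactly $\S_i(\Gama+\Delt)=\gama_i+\delt_i$. Hence $\|\Gama+\Delt\|_{0,\infty}=\max_i\|\gama_i+\delt_i\|_0$, and it suffices to control $\|\gama_i+\delt_i\|_0$ for each fixed $i$.

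Next I would invoke the fact that $\ell_0$ itself obeys the triangle inequality: the support of $\gama_i+\delt_i$ is contained in the union of the supports of $\gama_i$ and $\delt_i$, so $\|\gama_i+\delt_i\|_0\le\|\gama_i\|_0+\|\delt_i\|_0$. (This is standard and can be stated in one line; no cancellation can ever increase the number of non-zeros.) Combining with the previous step, $\|\Gama+\Delt\|_{0,\infty}\le\max_i\bigl(\|\gama_i\|_0+\|\delt_i\|_0\bigr)$.

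Finally I would use subadditivity of the maximum: for any nonnegative sequences $(a_i),(b_i)$ one has $\max_i(a_i+b_i)\le\max_i a_i+\max_i b_i$, since each summand is bounded by the respective maximum. Applying this with $a_i=\|\gama_i\|_0$ and $b_i=\|\delt_i\|_0$ yields $\|\Gama+\Delt\|_{0,\infty}\le\max_i\|\gama_i\|_0+\max_i\|\delt_i\|_0=\|\Gama\|_{0,\infty}+\|\Delt\|_{0,\infty}$, which is the desired inequality.

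Honestly, there is no real obstacle here: the only thing worth a sentence of care is the linearity of $\S_i$ (so that the stripes of a sum are the sums of the stripes), after which the result is just the composition of the two classical subadditivity statements above. One could alternatively phrase the whole argument via the matrix picture mentioned in the text — stacking all stripes of $\Gama$ as columns of a matrix $\A_\Gama$ and using the known triangle inequality of the matrix $\|\cdot\|_{0,\infty}$ norm on $\A_{\Gama+\Delt}=\A_\Gama+\A_\Delt$ — but the direct three-line argument seems cleanest.
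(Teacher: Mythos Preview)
Your proposal is correct and follows essentially the same approach as the paper: both arguments unfold $\|\Gama+\Delt\|_{0,\infty}=\max_i\|\gama_i+\delt_i\|_0$, apply the $\ell_0$ triangle inequality stripe-by-stripe, and then use subadditivity of the maximum. The paper's proof is just a terser version of exactly this three-step chain; your added remarks on the linearity of $\S_i$ and the explicit justification of $\max_i(a_i+b_i)\le\max_i a_i+\max_i b_i$ make the same argument slightly more explicit but are not a different route.
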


\begin{proof}
Let $\Gama^1$ and $\Gama^2$ be two global sparse vectors. Denote the $i^{th}$ stripe extracted from each as $\gama^1_i$ and $\gama^2_i$, respectively. Notice that
\begin{align*}
\| \Gama^1+\Gama^2 \|_{0,\infty} =\max_i\|\gama^1_i+\gama^2_i\|_0 & \leq\max_i \left(\|\gama^1_i\|_0+\|\gama^2_i\|_0\right)\\
\leq \max_i\|\gama^1_i\|_0 +\max_i\|\gama^2_i\|_0 & =\|\Gama^1\|_{0,\infty}+\|\Gama^2\|_{0,\infty}.
\end{align*}
In the first inequality we have used the triangle inequality of the $\ell_0$ norm.
\end{proof}

\renewcommand{\theequation}{B-\arabic{equation}}
\setcounter{equation}{0}  
\section{Guarantees for Pursuit Methods for $\Poi$} \label{Sec:ProofsTheoremsNoiseless}
In this section we prove both theorems presented in Section \ref{Sec:TheoStudy}, which guarantee the success of OMP and BP in solving the $\Poi$ problem. We begin by presenting the OMP proof.

\subsection{OMP Success Guarantee (Proof of Theorem \ref{thm:OMPSuccess})}
\begin{proof}
	Denoting by $\mathcal{T}$ the support of the solution $\Gama$, we can write
	\begin{equation}
	\X = \D\Gama = \sum_{t\in \mathcal{T}} \Gamma_t \d_t.
	\label{GlobalExpression3}
	\end{equation}
	Suppose, without loss of generality, that the sparsest solution has its largest coefficient (in absolute value) in $\Gamma_i$. 
	For the first step of the OMP to choose one of the atoms in the support, we require
	\begin{equation}
	|\d_i^T \X | > \max_{j\notin\mathcal{T}} | \d_j^T \X |.
	\end{equation}
	Substituting Equation \eqref{GlobalExpression3} in this requirement we obtain
	\begin{equation} \label{eq:inequality3}
	\left| \sum_{t\in \mathcal{T}}\Gamma_t\d_t^T\d_i \right| > \max_{j\notin\mathcal{T}} \left| \sum_{t\in \mathcal{T}}\Gamma_t\d_t^T\d_j \right|.
	\end{equation}
	Using the reverse triangle inequality, the assumption that the atoms are normalized, and that $|\Gamma_i|\geq|\Gamma_t|$, we construct a lower bound for the left hand side:
	\begin{align}
	\left| \sum_{t\in \mathcal{T}}\Gamma_t\d_t^T\d_i \right|
	\geq& |\Gamma_i| - \sum_{t \in \mathcal{T},t\neq i}|\Gamma_t|\cdot|\d_t^T\d_i | \\
	\geq& |\Gamma_i| - |\Gamma_i|\sum_{t \in \mathcal{T},t\neq i}|\d_t^T\d_i |.
	\end{align}
	Consider the stripe which completely contains the $i^{th}$ atom as shown in Figure \ref{Fig:details}. Notice that $\d_t^T\d_i$ is zero for every atom too far from $\d_i$ because the atoms do not overlap. Denoting the stripe which fully contains the $i^{th}$ atom as $p(i)$ and its support as $\mathcal{T}_{p(i)}$, we can restrict the summation as:
	\begin{equation} \label{Eq:EquationFromOMPproof}
	\left| \sum_{t\in \mathcal{T}}\Gamma_t\d_t^T\d_i \right| \geq |\Gamma_i| - |\Gamma_i|\sum_{t \in \mathcal{T}_{p(i)},t\neq i}|\d_t^T\d_i |.
	\end{equation}
	We can bound the right side by using the number of non-zeros in the support $\mathcal{T}_{p(i)}$, denoted by $n_{p(i)}$, together with the definition of the mutual coherence, obtaining:
	\begin{figure} 
		\includegraphics[trim = -80 80 80 0, width=.4\textwidth]{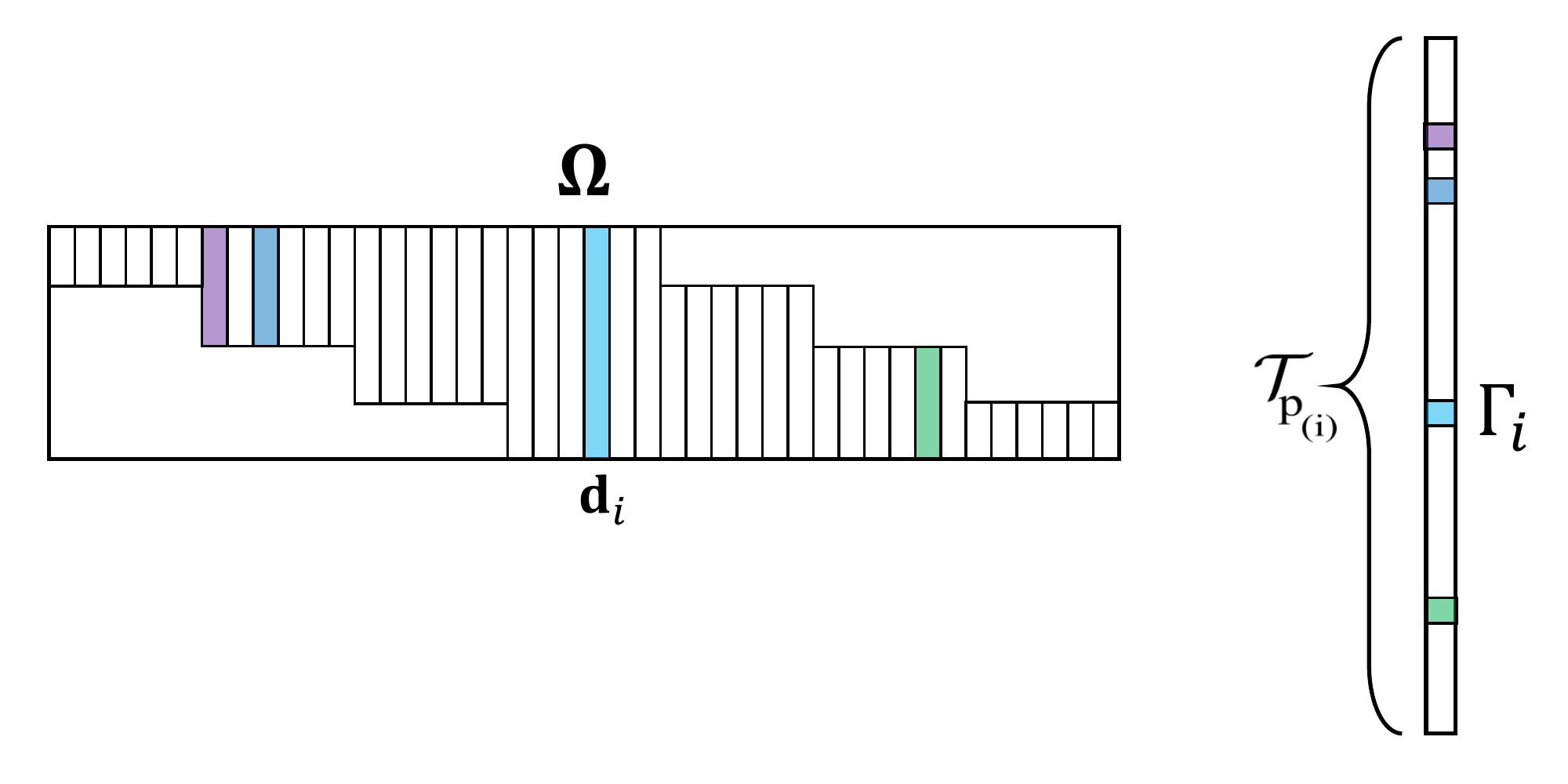}
		\caption{The $p_{(i)}$ stripe of atom $\d_i$.}
		\label{Fig:details}
		\vspace{-0.3cm}
	\end{figure}
	\begin{equation}
	\left| \sum_{t\in \mathcal{T}}\Gamma_t\d_t^T\d_i \right| \geq |\Gamma_i| -  |\Gamma_i| \cdot (n_{p(i)}-1)\cdot\mu(\D).
	\end{equation}
	Using the definition of the $\Loi$ norm, we obtain
	\begin{equation}
	\left| \sum_{t\in \mathcal{T}}\Gamma_t\d_t^T\d_i \right| \geq |\Gamma_i| -  |\Gamma_i| \cdot (\|\Gama\|_{0,\infty}-1)\cdot\mu(\D).
	\end{equation}
	Now, we construct an upper bound for the right hand side of Equation \eqref{eq:inequality3}, using the triangle inequality and the fact that $|\Gamma_i|$ is the maximal value in the sparse vector:
	\begin{align} \label{eq:SkipSteps2}
	\max_{j\notin\mathcal{T}}\left| \sum_{t\in \mathcal{T}}\Gamma_t\d_t^T\d_j \right| &\leq \max_{j\notin\mathcal{T}}\sum_{t \in \mathcal{T}} |\Gamma_t|\cdot|\d_t^T\d_j |\\
	&\leq |\Gamma_i|\max_{j\notin\mathcal{T}}\sum_{t \in \mathcal{T}}|\d_t^T\d_j |.
	\end{align}
	Relying on the same rational as above, we obtain:
	\begin{align}
	&& \max_{j\notin\mathcal{T}}\left| \sum_{t\in \mathcal{T}}\Gamma_t\d_t^T\d_j \right| & \leq |\Gamma_i|\max_{j\notin\mathcal{T}}\sum_{t \in \mathcal{T}_{p(j)}}|\d_t^T\d_j |\\
	&&\leq |\Gamma_i|\max_{j\notin\mathcal{T}}\ n_{p(j)}\cdot\mu(\D) & \leq |\Gamma_i|\cdot\|\Gama\|_{0,\infty}\cdot\mu(\D).
	\end{align}
	Using both bounds, we get
	\begin{align*}
	\left| \sum_{t\in \mathcal{T}}\Gamma_t\d_t^T\d_i \right|
	&\geq |\Gamma_i| -  |\Gamma_i| \cdot (\|\Gama\|_{0,\infty}-1)\cdot\mu(\D)\\
	&>|\Gamma_i|\cdot\|\Gama\|_{0,\infty}\mu(\D)
	\geq\max_{j\notin\mathcal{T}}\left| \sum_{t\in \mathcal{T}}\Gamma_t\d_t^T\d_j \right|.
	\end{align*}
	Thus,
	\begin{equation}
	1-(\|\Gama\|_{0,\infty}-1)\cdot\mu(\D) > \|\Gama\|_{0,\infty}\cdot\mu(\D).
	\end{equation}
	From this we obtain the requirement stated in the theorem. Thus, this condition guarantees the success of the first OMP step, implying it will choose an atom inside the true support.
	
	The next step in the OMP algorithm is an update of the residual. This is done by decreasing a term proportional to the chosen atom (or atoms within the correct support in subsequent iterations) from the signal. Thus, this residual is also a linear combination of the same atoms as the original signal. As a result, the $\Loi$ norm of the residual's representation is less or equal than the one of the true sparse code $\Gama$. Using the same set of steps we obtain that the condition on the $\Loi$ norm \eqref{eq:Loi_condition} guarantees that the algorithm chooses again an atom from the true support of the solution. Furthermore, the orthogonality enforced by the least-squares step guarantees that the same atom is never chosen twice. As a result, after $\|\Gama\|_0$ iterations the OMP will find all the atoms in the correct support, reaching a residual equal to zero.
\end{proof}

\subsection{BP Success Guarantee (Proof of Theorem \ref{thm:BPSuccess})}
\begin{proof}
	Define the following set
	\begin{equation}
	\C=\set*{ \hat{\Gama} \given
		\begin{gathered}
		\begin{split}
		&&\hat{\Gama}&\neq\Gama,& \quad \D(\hat{\Gama}-\Gama)&=\mathbf{0}\\
		&&\|\hat{\Gama}\|_1&\leq\|\Gama\|_1,& \quad \|\hat{\Gama}\|_{0,\infty}&>\|\Gama\|_{0,\infty}
		\end{split}
		\end{gathered}
	}.
	\end{equation}
	This set contains all alternative solutions which have lower or equal $\ell_1$ norm and higher $\|\cdot\|_{0,\infty}$ norm. If this set is non-empty, the solution of the basis pursuit is different from $\Gama$, implying failure. In view of our uniqueness result, and the condition posed in this theorem on the $\Loi$ cardinality of $\Gama$, every solution $\hat{\Gama}$ which is not equal to $\Gama$ must have a higher $\|\cdot\|_{0,\infty}$ norm. Thus, we can omit the requirement $\|\hat{\Gama}\|_{0,\infty}>\|\Gama\|_{0,\infty}$ from $\C$.
	
	By defining $\Delt=\hat{\Gama}-\Gama$, we obtain a shifted version of the set,
	\begin{equation}
	\C_s=\set*{ \Delt \given
		\begin{gathered}
		\Delt\neq\mathbf{0}, \quad \D\Delt=\mathbf{0}\\
		\mathbf{0}\geq\|\Delt+\Gama\|_1-\|\Gama\|_1
		\end{gathered}
	}.
	\end{equation}
	In what follows, we will enlarge the set $\C_s$ and prove that it remains empty even after this expansion.
	Since $\D\Delt=\mathbf{0}$, then $	\D^T\D\Delt=\mathbf{0}$. By subtracting $\Delt$ from both sides, we obtain
	\begin{equation}
	-\Delt=(\D^T\D-\mathbf{I})\Delt.
	\label{Eq:equality_constraint_L1proof}
	\end{equation}
	Taking an entry-wise absolute value on both sides, we obtain
	\begin{equation}
	|\Delt|=|(\D^T\D-\mathbf{I})\Delt|\leq|\D^T\D-\mathbf{I}|\cdot|\Delt|,
	\end{equation}
	where we have applied the triangle inequality to the multiplication of the $i^{th}$ row of $(\D^T\D-\mathbf{I})$ by the vector $\Delt$. Note that in the convolutional case $\D^T\D$ is zero for inner products of atoms which do not overlap. Furthermore, the $i^{th}$ row of $\D^T\D$ is non-zero only in the indices which correspond to the stripe that fully contains the $i^{th}$ atom, and these non-zero entries can be bounded by $\mu(\D)$. Thus, extracting the $i^{th}$ row from the above equation gives
	\begin{equation*}
	|\Delta_i|\leq\mu(\D)\left(\|\delt_{p(i)}\|_1-|\Delta_i|\right), \label{Eq:Delta_i_abs}
	\end{equation*}
	where $p(i)$ is the stripe centered around the $i^{th}$ atom and $\delt_{p(i)}$ is the corresponding sparse vector of length $(2n-1)m$ extracted from $\Delt$, as can be seen in Figure \ref{proofL1}.
	\begin{figure}[t]
		\centering
		\includegraphics[trim =10 0 10 0 ,width=.33\textwidth]{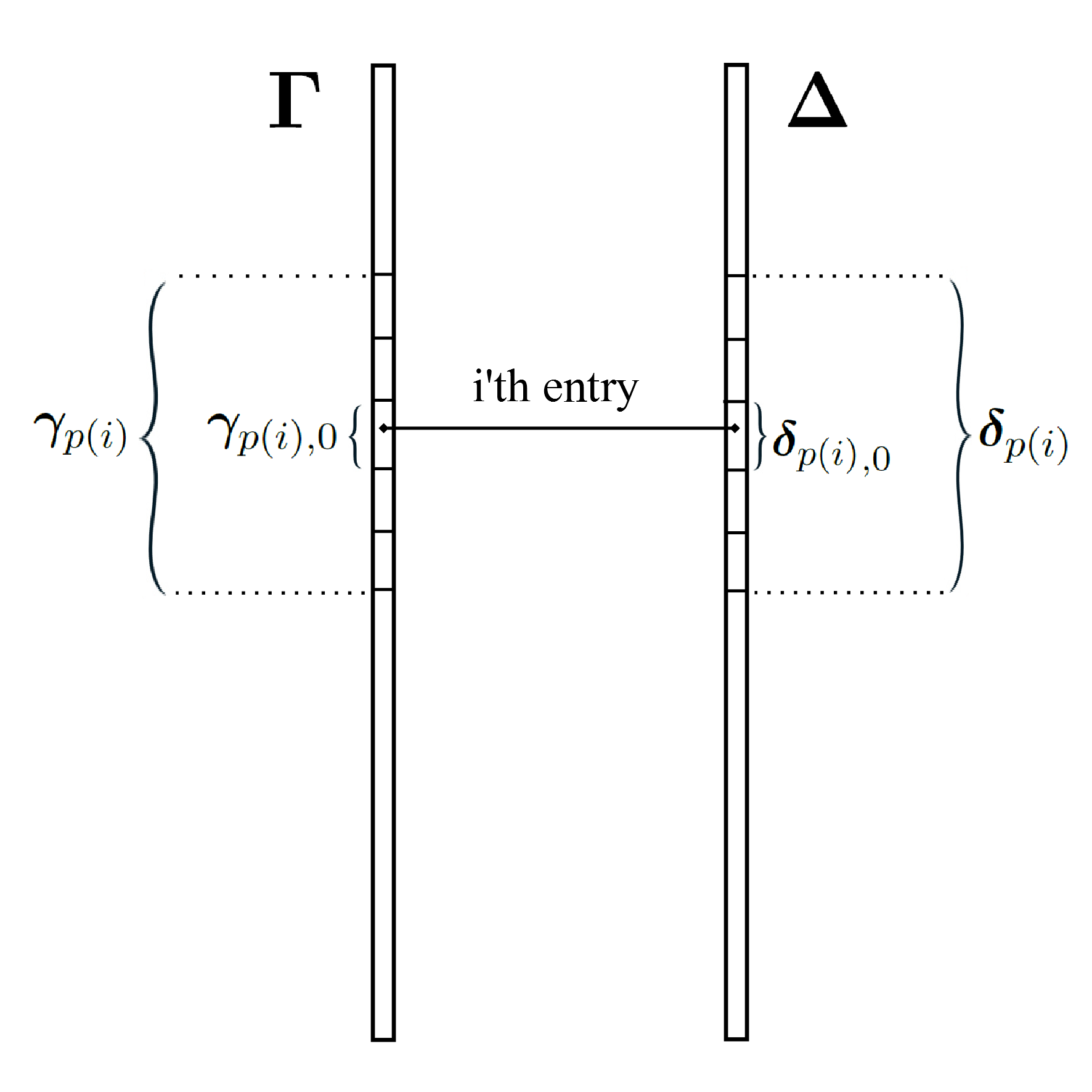}
		\caption{On the left we have the global sparse vector $\Gama$, a stripe  $\gama_{p(i)}$ (centered around the $i^{th}$ atom) extracted from it, and the center of this stripe $\gama_{p(i),0}$. The length of the stripe $\gama_{p(i)}$ is $(2n-1)m$ and the length of $\gama_{p(i),0}$ is $m$. On the right we have the corresponding global vector $\Delt$. Notice that if we were to consider the $i+1$ entry instead of the $i^{th}$, the vector corresponding to $\delt_{p(i)}$ would not change because the atoms $i$ and $i+1$ are fully overlapping.}
		\label{proofL1}
		\vspace{-0.2cm}
	\end{figure}
	This can be written as
	\begin{equation}
	|\Delta_i|\leq\frac{\mu(\D)}{\mu(\D)+1}\|\delt_{p(i)}\|_1.
	\end{equation}
	The above expression is a relaxation of the equality in Equation \eqref{Eq:equality_constraint_L1proof}, since each entry $\Delta_i$ is no longer constrained to a specific value, but rather bounded from below and above. Therefore, by putting the above into $\C_s$, we obtain a larger set $\C_s^1$:
	\begin{equation} 
	\C_s\subseteq\C_s^1=\set*{ \Delt \given
		\begin{gathered}
		\Delt\neq\mathbf{0}, \quad 	\mathbf{0}\geq\|\Delt+\Gama\|_1-\|\Gama\|_1 \\
		|\Delta_i|\leq\frac{\mu(\D)}{\mu(\D)+1}\|\delt_{p(i)}\|_1, \quad\forall i
		\end{gathered}
	}.
	\end{equation}
	Next, let us examine the second requirement
	\begin{align}
	\mathbf{0}\geq & \|\Delt+\Gama\|_1-\|\Gama\|_1 \\ \label{Eq:equality_constraint_L1proof_2}
	= & \sum_{i\in \mathcal{T}(\Gama)} \left(|\Delta_i+\Gamma_i|-|\Gamma_i|\right) + \sum_{i\notin \mathcal{T}(\Gama)} |\Delta_i|,
	\end{align}
	where, as before, $\mathcal{T}(\Gama)$ denotes the support of $\Gama$.
	Using the reverse triangle inequality, $|a+b|-|b|\geq-|a|$, we obtain
	\begin{align}\label{eq:inequality_bp_proof}
	\mathbf{0}&\geq\sum_{i\in \mathcal{T}(\Gama)} \left(|\Delta_i+\Gamma_i|-|\Gamma_i|\right) + \sum_{i\notin \mathcal{T}(\Gama)} |\Delta_i|\\ \nonumber
	&\geq\sum_{i\in \mathcal{T}(\Gama)} -|\Delta_i| + \sum_{i\notin \mathcal{T}(\Gama)} |\Delta_i| =\|\Delt\|_1-2\mathds{1}^T_{\mathcal{T}(\Gama)}|\Delt|, 
	\end{align}
	where the vector $\mathds{1}_{\mathcal{T}(\Gama)}$ contains ones in the entries corresponding to the support of $\Gama$ and zeros elsewhere. 
	Note that every vector satisfying Equation \eqref{Eq:equality_constraint_L1proof_2} will necessarily satisfy Equation \eqref{eq:inequality_bp_proof}. Therefore, by relaxing this constraint in $\C_s^1$, we obtain a larger set $\C_s^2$
	\begin{equation} \label{eq:CS2}
	\C_s^1\subseteq\C_s^2=\set*{ \Delt \given
		\begin{gathered}
		\Delt\neq\mathbf{0}, \quad 	\mathbf{0}\geq\|\Delt\|_1-2\mathds{1}^T_{\mathcal{T}(\Gama)}|\Delt|\\
		|\Delta_i|\leq\frac{\mu(\D)}{\mu(\D)+1}\|\delt_{p(i)}\|_1, \quad\forall i
		\end{gathered}
	}.
	\end{equation}
	Next, we will show the above defined set is empty for a small-enough support. We begin by summing the inequalities \mbox{$|\Delta_i|\leq\frac{\mu(\D)}{\mu(\D)+1}\|\delt_{p(i)}\|_1$} over the support of $\gama_{{p(i)},0}$. Recall that $\gama_{p(i)}$ is defined to be a stripe of length $(2n-1)m$ extracted from the global representation vector and $\gama_{p(i),0}$ corresponds to the central $m$ coefficients in the $p(i)$ stripe. Also, note that $\delt_{p(i)}$ is equal for all the entries inside the support of $\gama_{p(i),0}$. Since all the atoms inside the support of $\gama_{p(i),0}$ are fully overlapping, $\delt_{p(i)}$ does not change, as explained in Figure \ref{proofL1}.
	Thus, we obtain
	\begin{equation*}
	\mathds{1}^T_{\mathcal{T}(\gama_{p(i),0})}|\Delt|\leq\frac{\mu(\D)}{\mu(\D)+1}\cdot\|\gama_{p(i),0}\|_0\cdot\|\delt_{p(i)}\|_1.
	\end{equation*}
	Summing over all different $p(i)$ we obtain
	\begin{equation} \label{eq:goingback}
	\mathds{1}^T_{\mathcal{T}(\Gama)}|\Delt|\leq\frac{\mu(\D)}{\mu(\D)+1}\sum_k\|\gama_{k,0}\|_0\cdot\|\delt_k\|_1.
	\end{equation}
	Notice that in the sum above we multiply the $\ell_0$-norm of the \emph{local sparse vector} $\gama_{k,0}$ by the $\ell_1$ norm of the \emph{stripe} $\delt_k$. In what follows, we will show that, instead, we could multiply the $\ell_0$-norm of the \emph{stripe} $\gama_k$ by the $\ell_1$ norm of the \emph{local sparse vector} $\delt_{k,0}$, thus changing the order between the two. As a result, we will obtain the following inequality:
	\begin{equation}
	\mathds{1}^T_{\mathcal{T}(\Gama)}|\Delt|\leq\frac{\mu(\D)}{\mu(\D)+1}\sum_k\|\gama_{k}\|_0\cdot\|\delt_{k,0}\|_1.
	\end{equation}	
	Returning to Equation \eqref{eq:goingback}, we begin by decomposing the $\ell_1$ norm of the stripe $\delt_k$ into all possible shifts ($m-$dimensional chunks) and pushing the sum outside, obtaining:
	\begin{align}
	\mathds{1}^T_{\mathcal{T}(\Gama)}|\Delt|&\leq\frac{\mu(\D)}{\mu(\D)+1}\sum_k\|\gama_{k,0}\|_0\cdot\|\delt_k\|_1\\
	&=\frac{\mu(\D)}{\mu(\D)+1}\sum_k\left(\|\gama_{k,0}\|_0\sum_{j=k-n+1}^{k+n-1}\|\delt_{j,0}\|_1\right)\\
	&=\frac{\mu(\D)}{\mu(\D)+1}\sum_k\sum_{j=k-n+1}^{k+n-1}\|\gama_{k,0}\|_0\|\delt_{j,0}\|_1. \label{eq:matrix_sum}
	\end{align}
	Define a banded matrix $\A$ (with a band of width $2n-1$) such that $\A_{k,j}=\|\gama_{k,0}\|_0\cdot\|\delt_{j,0}\|_1$,	where $k-n+1\leq j\leq k+n-1$. Notice that the summation in \eqref{eq:matrix_sum} is equal to the sum of all entries in this matrix, where the first sum considers all its rows $k$ while the second sum considers all its columns $j$ (the second sum is restricted to the non-zero band). Instead, this interpretation suggests that we could first sum over all the columns $j$, and only then sum over all the rows $k$ which are inside the band. As a result, we obtain that
	\begin{align}
	\mathds{1}^T_{\mathcal{T}(\Gama)}|\Delt|&\leq\frac{\mu(\D)}{\mu(\D)+1}\sum_k\sum_{j=k-n+1}^{k+n-1}\|\gama_{k,0}\|_0\cdot\|\delt_{j,0}\|_1\\
	&=\frac{\mu(\D)}{\mu(\D)+1}\sum_j\sum_{k=j-n+1}^{j+n-1}\|\gama_{k,0}\|_0\cdot\|\delt_{j,0}\|_1\\
	&=\frac{\mu(\D)}{\mu(\D)+1}\sum_j\left(\|\delt_{j,0}\|_1\sum_{k=j-n+1}^{j+n-1}\|\gama_{k,0}\|_0\right).
	\end{align}
	Summing over all possible shifts we obtain the $\ell_0$-norm of the stripe $\gama_j$; i.e.,
	\begin{equation}
	\mathds{1}^T_{\mathcal{T}(\Gama)}|\Delt| \leq \frac{\mu(\D)}{\mu(\D)+1}\sum_j\|\delt_{j,0}\|_1\cdot\|\gama_{j}\|_0.
	\end{equation}
	Using the definition of $\|\cdot\|_{0,\infty}$
	\begin{align}
	\mathds{1}^T_{\mathcal{T}(\Gama)}|\Delt|&\leq\frac{\mu(\D)}{\mu(\D)+1}\sum_j\|\delt_{j,0}\|_1\cdot\|\gama_{j}\|_0\\
	&\leq\frac{\mu(\D)}{\mu(\D)+1}\sum_j\|\delt_{j,0}\|_1\cdot\|\Gama\|_{0,\infty}\\
	&\leq\frac{\mu(\D)}{\mu(\D)+1}\cdot\|\Delt\|_1\cdot\|\Gama\|_{0,\infty}. \label{eq:second_inequality}
	\end{align}
	For the set $\C_s^2$ to be non-empty, there must exist a $\Delt$ which satisfies
	\begin{align}
	\mathbf{0}\geq & \|\Delt\|_1-2\mathds{1}^T_{\mathcal{T}(\Gama)}|\Delt| \\ \geq & \|\Delt\|_1-2\frac{\mu(\D)}{\mu(\D)+1}\cdot\|\Delt\|_1\cdot\|\Gama\|_{0,\infty},
	\end{align}
	where the first and second inequalities are given in \eqref{eq:inequality_bp_proof} and \eqref{eq:second_inequality}, respectively.
	Rearranging the above we obtain $\|\Gama\|_{0,\infty}\geq\frac{1}{2}\left(1+\frac{1}{\mu(\D)}\right)$. However, we have assumed that $\|\Gama\|_{0,\infty}<\frac{1}{2}\left(1+\frac{1}{\mu(\D)}\right)$ and thus the previous inequality is not satisfied. As a result, the set we have defined is empty, implying that BP leads to the desired solution.
\end{proof}

\renewcommand{\theequation}{C-\arabic{equation}}
\setcounter{equation}{0}  
\section{Properties of the Shifted Mutual Coherence and Stripe Coherence} 
\label{App:MoreOnShiftedMu}
The shifted mutual coherence exhibits some interesting properties: 
\begin{enumerate}[a)]
\item $\mu_s$ is symmetric with respect to the shift $s$, i.e. $\mu_s=\mu_{-s}$.
\item Its maximum over all shifts equals the global mutual coherence of the convolutional dictionary: $\mu(\D) = \underset{s}{\max}\ \mu_s$.
\item The mutual coherence of the local dictionary is bounded by that of the global one: $\mu(\D_L) = \mu_0 \leq\underset{s}{\max}\ \mu_s=\mu(\D)$.
\end{enumerate}
We now briefly remind the definition of the maximal stripe coherence, as we will make use of it throughout the rest of the appendix.
Given a vector $\Gama$, recall that the stripe coherence is defined as $\zeta(\gama_i) = \sum_{s=-n+1}^{n-1} n_{i,s}\ \mu_s$, where $n_{i,s}$ is the number of non-zeros in the $s^{th}$ shift of $\gama_i$, taken from $\Gama$. The reader might ponder how the maximal stripe coherence might be computed.
Let us now define the vector $\mathbf{v}$ which contains in its $i^{th}$ entry the number $n_{i,0}$. Using this definition, the coherence of every stripe can be calculated efficiently by convolving the vector $\mathbf{v}$ with the vector of the shifted mutual coherences $[\mu_{-n+1},\dots,\mu_{-1},\mu_0,\mu_1,\dots,\mu_{n-1}]$.

Next, we provide an experiment in order to illustrate the shifted mutual coherence. To this end, we generate a random local dictionary with $m=8$ atoms of length $n=64$ and afterwards normalize its columns. We then construct a convolutional dictionary which contains global atoms of length $N = 640$. We exhibit the shifted mutual coherences for this dictionary in Figure \ref{fig:Mus}. 

\begin{figure}[t]
\begin{center}
	\begin{subfigure}[t]{.24\textwidth}
		\includegraphics[trim = 0 0 50 20, width=\textwidth]{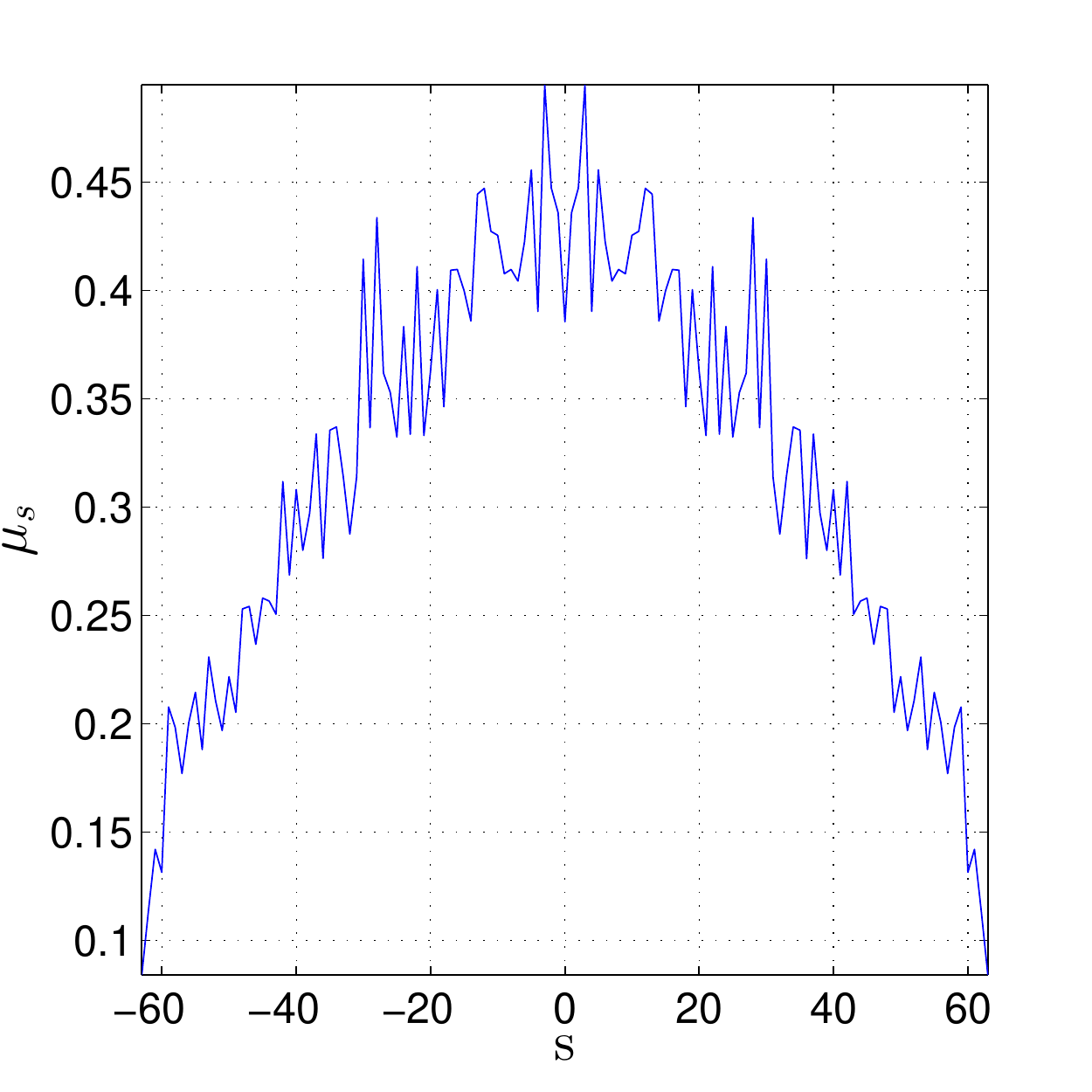}
		\caption{}
		\label{fig:Mus}
	\end{subfigure}
	\begin{subfigure}[t]{.24\textwidth}
		\includegraphics[trim = 50 0 0 50, width=\textwidth]{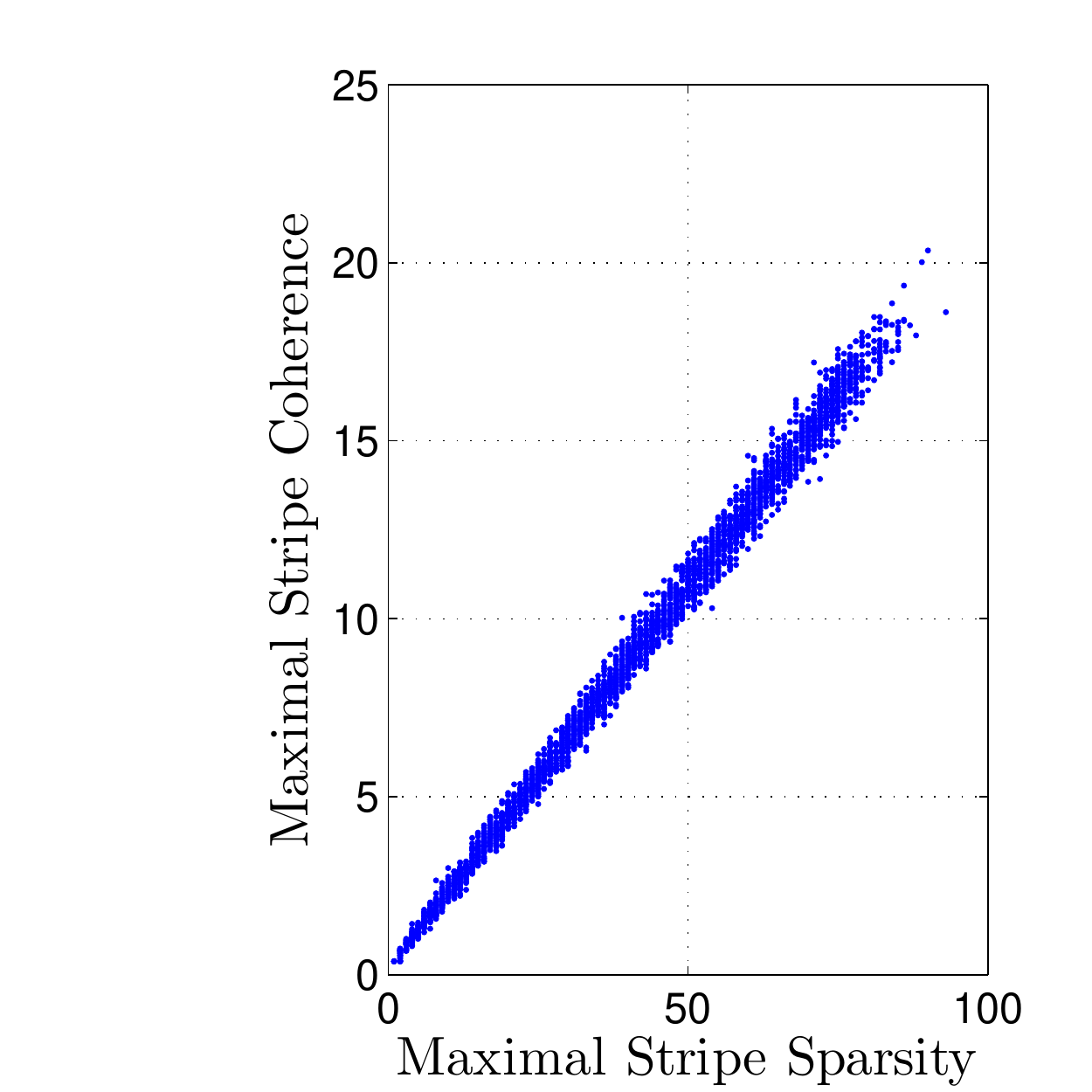}
		\caption{}
		\label{fig:L0inf_sc}
	\end{subfigure}
	\caption{Left: the shifted mutual coherence as function of the shift. The larger the shift between the atoms, the lower $\mu_s$ is expected to be. Right: the maximal stripe coherence as a function of the $\Loi$ norm, for random realizations of global sparse vectors.}
	\vspace{-0.5cm}
\end{center}
\end{figure}

Given this dictionary, we generate sparse vectors with random supports of cardinalities in the range $\left[1,300\right]$. For each sparse vector we compute its $\Loi$ norm by searching for the densest stripe, and its maximal stripe coherence using the convolution mentioned above. In Figure \ref{fig:L0inf_sc} we illustrate the connection between the $\Loi$ norm and the maximal stripe coherence for this set of sparse vectors. As expected, the $\Loi$ norm and the maximal stripe coherence are highly correlated. In Appendix \ref{App:OMPproofStripeCoherence}, we will show an analysis which is based on both these measures. Although the theorems based on the stripe coherence are sharper, they are harder to comprehend. In this experiment we attempted to alleviate this by showing an intuitive connection between the two.

We now present a theorem relating the stripe coherences of related sparse vectors.
\begin{thm} \label{Thm:max_stripe_coherence_contained}
Let $\Gama_1$ and $\Gama_2$ be two global sparse vectors such that the support of $\Gama_1$ is contained in the support of $\Gama_2$. Then the maximal stripe coherence of $\Gama_1$ is less or equal than the maximal stripe coherence of $\Gama_2$.
\end{thm}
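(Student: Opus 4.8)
The plan is to show that containment of supports implies that each stripe of $\Gama_1$ has, shift by shift, no more non-zeros than the corresponding stripe of $\Gama_2$, and then invoke the monotonicity of the stripe coherence as a non-negative weighted sum of these counts. Concretely, fix an index $i$ and consider the stripes $\gama^1_i = \S_i \Gama_1$ and $\gama^2_i = \S_i \Gama_2$. Since $\S_i$ is a fixed columns-selection operator, the support of $\gama^1_i$ is exactly the image under $\S_i$ of the support of $\Gama_1$ intersected with the window that $\S_i$ extracts, and similarly for $\gama^2_i$. Because $\mathrm{supp}(\Gama_1) \subseteq \mathrm{supp}(\Gama_2)$, intersecting both sides with this fixed window preserves the inclusion, so $\mathrm{supp}(\gama^1_i) \subseteq \mathrm{supp}(\gama^2_i)$.

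Next I would restrict this inclusion to each of the $2n-1$ shift-blocks inside the stripe. Writing $n^1_{i,s}$ and $n^2_{i,s}$ for the number of non-zeros in the $s^{th}$ $m$-dimensional chunk $\gama^1_{i,s}$ and $\gama^2_{i,s}$ respectively, the block-wise restriction of the support inclusion gives $n^1_{i,s} \le n^2_{i,s}$ for every $s \in \{-n+1,\dots,n-1\}$. Since the shifted mutual coherences $\mu_s$ are non-negative, multiplying each inequality by $\mu_s$ and summing yields
\begin{equation}
\zeta(\gama^1_i) = \sum_{s=-n+1}^{n-1} n^1_{i,s}\,\mu_s \ \leq\ \sum_{s=-n+1}^{n-1} n^2_{i,s}\,\mu_s = \zeta(\gama^2_i).
\end{equation}
This holds for every $i$, so taking the maximum over $i$ on both sides gives $\max_i \zeta(\gama^1_i) \le \max_i \zeta(\gama^2_i)$, which is the claim. (One small care: the maximizing index for $\Gama_1$ need not be the maximizing index for $\Gama_2$, but since the inequality is pointwise in $i$, $\max_i \zeta(\gama^1_i) = \zeta(\gama^1_{i^\star}) \le \zeta(\gama^2_{i^\star}) \le \max_i \zeta(\gama^2_i)$ closes this.)

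There is essentially no analytic obstacle here; the proof is a bookkeeping argument. The only point that needs mild care is the very first step — making precise that the stripe-extraction operator $\S_i$ and the block decomposition interact cleanly with set containment, i.e. that no cancellation or overlap effect can make a stripe of $\Gama_1$ denser than the corresponding stripe of $\Gama_2$. Once one observes that $\S_i$ is a fixed $0/1$ selection matrix (independent of the vector it acts on) and that the $\gama_{i,s}$ are simply coordinate sub-blocks of $\gama_i$, the inclusion of supports propagates automatically, and the non-negativity of the weights $\mu_s$ does the rest. I would therefore expect the write-up to be short, with the bulk of the text devoted to fixing notation for the block-wise non-zero counts and their relation to the supports.
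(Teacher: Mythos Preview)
Your proposal is correct and follows essentially the same approach as the paper: from $\mathrm{supp}(\Gama_1)\subseteq\mathrm{supp}(\Gama_2)$ deduce $n^1_{i,s}\le n^2_{i,s}$ for all $i,s$, sum against the non-negative weights $\mu_s$, and take the maximum over $i$. Your write-up is slightly more explicit than the paper's (you spell out the role of $\S_i$ as a fixed selection operator and the non-negativity of $\mu_s$), but the argument is identical in substance.
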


\begin{proof}
Denote by $\gama_i^1$ and $\gama_i^2$ the $i^{th}$ stripe extracted from $\Gama_1$ and $\Gama_2$, respectively. Also, denote by $n_{i,s}^{1}$ and $n_{i,s}^{2}$ the number of non-zeros in the $s^{th}$ shift of $\gama_i^1$ and $\gama_i^2$, respectively. Since the support of $\Gama_1$ is contained in the support of $\Gama_2$, we have that $\forall i, s \quad n_{i,s}^{1}\leq n_{i,s}^{2}$. As a result, we have that
\begin{equation}
\max_i\sum_{s=-n+1}^{n-1} n_{i,s}^{1} \mu_s\leq
\max_i\sum_{s=-n+1}^{n-1} n_{i,s}^{2} \mu_s.
\end{equation}
The left-hand side of the above inequality is the maximal stripe coherence of $\Gama_1$, while the right-hand side is the corresponding one for $\Gama_2$. Thus, we conclude that the maximal stripe coherence of $\Gama_1$ is less or equal than the maximal stripe coherence of $\Gama_2$.
\end{proof}

\renewcommand{\theequation}{D-\arabic{equation}}
\setcounter{equation}{0}  
\section{OMP Success Guarantee via Stripe Coherence (Proof of Theorem \ref{thm:OMPSuccess_stripeCoherence})} 
\label{App:OMPproofStripeCoherence}

\begin{proof}
	
	The first steps of this proof are exactly those derived in proving Theorem \ref{thm:OMPSuccess}, and thus we omit them for the sake brevity. Recall that in order for the first step of OMP to succeed, we require
	\begin{equation} \label{eq:inequality2}
	\left| \sum_{t\in \mathcal{T}}\Gamma_t\d_t^T\d_i \right| > \max_{j\notin\mathcal{T}} \left| \sum_{t\in \mathcal{T}}\Gamma_t\d_t^T\d_j \right|.
	\end{equation}
	Lower bounding the left hand side of the above inequality, we can write
	\begin{equation}
	\left| \sum_{t\in \mathcal{T}}\Gamma_t\d_t^T\d_i \right| \geq |\Gamma_i| - |\Gamma_i|\sum_{t \in \mathcal{T}_{p(i)},t\neq i}|\d_t^T\d_i |,
	\end{equation}
	as stated previously in Equation \eqref{Eq:EquationFromOMPproof}.
	Instead of summing over the support $\mathcal{T}_{p(i)}$, we can sum over all the supports $\mathcal{T}_{{p(i)},s}$, which correspond to all possible shifts. We can then write
	\begin{equation}
	\left| \sum_{t\in \mathcal{T}}\Gamma_t\d_t^T\d_i \right| \geq |\Gamma_i| -  |\Gamma_i|  \sum_{s = -n+1}^{n-1} \ \sum_{\substack{t \in \mathcal{T}_{{p(i)},s} \\ t\neq i}} |\d_t^T\d_i |.
	\end{equation}
	We can bound the right term by using the number of non-zeros in each sub-support $\mathcal{T}_{{p(i)},s}$, denoted by $n_{{p(i)},s}$, together with the corresponding shifted mutual coherence $\mu_s$. Also, we can disregard the constraint $t\neq i$ in the above summation by subtracting an extra $\mu_0$ term, obtaining:
	\begin{equation}
	\left| \sum_{t\in \mathcal{T}}\Gamma_t\d_t^T\d_i \right| \geq |\Gamma_i| -  |\Gamma_i| \left( \sum_{s = -n+1}^{n-1}\mu_s n_{p(i),s}-\mu_0 \right).
	\end{equation}
	Bounding the above by the maximal stripe coherence, we obtain
	\begin{equation}
	\left| \sum_{t\in \mathcal{T}}\Gamma_t\d_t^T\d_i \right| \geq |\Gamma_i| -  |\Gamma_i|  \left( \max_k\sum_{s = -n+1}^{n-1}\mu_s n_{k,s}-\mu_0 \right).
	\end{equation}
	In order to upper bound the right hand side of Equation \eqref{eq:inequality2} we follow the steps leading to Equation \eqref{eq:SkipSteps2}, resulting in
	\begin{equation}
	\max_{j\notin\mathcal{T}}\left| \sum_{t\in \mathcal{T}}\Gamma_t\d_t^T\d_j \right| \leq |\Gamma_i|\max_{j\notin\mathcal{T}}\sum_{t \in \mathcal{T}_{p(j)}}|\d_t^T\d_j |.
	\end{equation}
	Using a similar decomposition of the support and the definition of the shifted mutual coherence, we have
	\begin{align}
	\max_{j\notin\mathcal{T}}\left| \sum_{t\in \mathcal{T}}\Gamma_t\d_t^T\d_j \right| &\leq |\Gamma_i|\max_{j\notin\mathcal{T}} \sum_{s=-n+1}^{n-1} \ \sum_{t\in\mathcal{T}_{p(j),s}} |\d_t^T\d_j |\\
	& \leq |\Gamma_i|\max_{j\notin\mathcal{T}} \sum_{s=-n+1}^{n-1} \mu_s n_{p(j),s}.
	\end{align}
	Once again bounding this expression by the maximal stripe coherence, we obtain
	\begin{equation}
	\max_{j\notin\mathcal{T}}\left| \sum_{t\in \mathcal{T}}\Gamma_t\d_t^T\d_j \right| \leq |\Gamma_i|\cdot\max_k\sum_{s=-n+1}^{n-1} \mu_s n_{k,s}.
	\end{equation}
	Using both bounds, we have that
	\begin{align*}
	\left| \sum_{t\in \mathcal{T}}\Gama_t \d_t^T\d_i \right|
	& \geq|\Gamma_i| -  |\Gamma_i| \left( \max_k\sum_{s=-n+1}^{n-1} \mu_s n_{k,s}-\mu_0 \right) \\
	& >|\Gamma_i|\cdot \max_k\sum_{s=-n+1}^{n-1} \mu_s n_{k,s} \\
	& \geq\max_{j\notin\mathcal{T}}\left| \sum_{t\in \mathcal{T}}\Gamma_t\d_t^T\d_j \right|.
	\end{align*}
	Thus,
	\begin{equation}
	1-\max_k\sum_{s=-n+1}^{n-1} \mu_s n_{k,s}+\mu_0 \ > \ \max_k\sum_{s=-n+1}^{n-1} \mu_s n_{k,s}.
	\end{equation}
	Finally, we obtain
	\begin{equation}
	\max_k\ \zeta_k=\max_k\sum_{s=-n+1}^{n-1} \mu_s n_{k,s}\ < \ \frac{1}{2}\left(1+\mu_0\right),
	\end{equation}
	which is the requirement stated in the theorem. Thus, this condition guarantees the success of the first OMP step, implying it will choose an atom inside \mbox{the true support $\mathcal{T}$.}
		
	The next step in the OMP algorithm is an update of the residual. This is done by decreasing a term proportional to the chosen atom (or atoms within the correct support in subsequent iterations) from the signal. Thus, the support of this residual is contained within the support of the true signal. As a result, according to Theorem \ref{Thm:max_stripe_coherence_contained}, the maximal stripe coherence corresponding to the residual is less or equal to the one of the true sparse code $\Gama$. Using the same set of steps we obtain that the condition on the maximal stripe coherence \eqref{eq:stripe_coherence_condition3} guarantees that the algorithm chooses again an atom from the true support of the solution. Furthermore, the orthogonality enforced by the least-squares step guarantees that the same atom is never chosen twice. As a result, after $\|\Gama\|_0$ iterations the OMP will find all the atoms in the correct support, reaching a residual equal to zero.
\end{proof}

We have provided two theorems for the success of the OMP algorithm. Before concluding, we aim to show that assuming $\mu(\D)=\mu_0$, the guarantee based on the stripe coherence is at least as strong as the one based on the $\Loi$ norm. Assume the recovery condition using the $\Loi$ norm is met and as such $\|\Gama\|_{0,\infty} = \max_i \ n_i < \ \frac{1}{2}\left(1+\frac{1}{\mu(\D)}\right)$, where $n_i$ is equal to $\|\gama_i\|_0$. Multiplying both sides by $\mu(\D)$ we obtain \mbox{$\max_i \ n_i\cdot\mu(\D) < \ \frac{1}{2}\left(1+\mu(\D)\right)$}. Using the above inequality and the properties:
\begin{equation}
1) \sum\limits_{s=-n+1}^{n-1} n_{i,s}=n_i, \qquad 2)\  \forall s \quad \mu_s\leq\mu(\D),
\end{equation}
we have that
\begin{align}
\max_i\sum_{s=-n+1}^{n-1} n_{i,s} \mu_s&\leq\max_i \sum_{s=-n+1}^{n-1} n_{i,s}\mu(\D)\\
&=\max_i \ n_i\cdot\mu(\D) < \ \frac{1}{2}\left(1+\mu(\D)\right).
\end{align}
Thus, we obtain that
\begin{equation}
\max_i\sum_{s=-n+1}^{n-1} n_{i,s} \mu_s < \ \frac{1}{2}\left(1+\mu(\D)\right)=\frac{1}{2}\left(1+\mu_0\right),
\end{equation}
where we have used our assumption that $\mu(\D)=\mu_0$. We conclude that if the recovery condition based on the $\Loi$ norm is met, then so is the one based on the stripe coherence. As a result, the condition based on the stripe coherence is at least as strong as the one based on the $\Loi$ norm.

As a final note, we mention that assuming $\mu(\D)= \underset{s}{\max}\ \mu_s = \mu_0$ is in fact a reasonable assumption. Recall that in order to compute $\mu_s$ we evaluate inner products between atoms which are $s$ indexes shifted from each other. As a result, the higher the shift $s$ is, the less overlap the atoms have, and the less $\mu_s$ is expected to be. Thus, we expect the value $\mu_0$ to be the largest or close to it in most cases.


\bibliographystyle{ieeetr}
\bibliography{MyBib}

\begin{thebibliography}{10}

\bibitem{Bruckstein2009}
A.~M. Bruckstein, D.~L. Donoho, and M.~Elad, ``{From Sparse Solutions of
  Systems of Equations to Sparse Modeling of Signals and Images},'' {\em SIAM
  Review.}, vol.~51, pp.~34--81, Feb. 2009.

\bibitem{Mairal2014}
J.~Mairal, F.~Bach, and J.~Ponce, ``Sparse modeling for image and vision
  processing,'' {\em arXiv preprint arXiv:1411.3230}, 2014.

\bibitem{Romano2014}
Y.~Romano, M.~Protter, and M.~Elad, ``Single image interpolation via adaptive
  nonlocal sparsity-based modeling,'' {\em IEEE Trans. on Image Process.},
  vol.~23, no.~7, pp.~3085--3098, 2014.

\bibitem{Dong2011}
W.~Dong, L.~Zhang, G.~Shi, and X.~Wu, ``Image deblurring and super-resolution
  by adaptive sparse domain selection and adaptive regularization,'' {\em IEEE
  Trans. on Image Process.}, vol.~20, no.~7, pp.~1838--1857, 2011.

\bibitem{Mallat1993}
S.~Mallat and Z.~Zhang, ``{Matching Pursuits With Time-Frequency
  Dictionaries},'' {\em IEEE Trans. Signal Process.}, vol.~41, no.~12,
  pp.~3397--3415, 1993.

\bibitem{Elad2005}
M.~Elad, J.~Starck, P.~Querre, and D.~Donoho, ``{Simultaneous cartoon and
  texture image inpainting using morphological component analysis (MCA)},''
  {\em Applied and Computational Harmonic Analysis}, vol.~19, pp.~340--358,
  Nov. 2005.

\bibitem{Engan1999}
K.~Engan, S.~O. Aase, and J.~H. Husoy, ``{Method of Optimal Directions for
  Frame Design},'' in {\em IEEE Int. Conf. Acoust. Speech, Signal Process.},
  pp.~2443--2446, 1999.

\bibitem{Aharon2006}
M.~Aharon, M.~Elad, and A.~M. Bruckstein, ``{K-SVD: An Algorithm for Designing
  Overcomplete Dictionaries for Sparse Representation},'' {\em IEEE Trans. on
  Signal Process.}, vol.~54, no.~11, pp.~4311--4322, 2006.

\bibitem{Mairal2009a}
J.~Mairal, F.~Bach, J.~Ponce, and G.~Sapiro, ``{Online Dictionary Learning for
  Sparse Coding},'' in {\em Int. Conference on Machine Learning}, 2009.

\bibitem{Li2011}
X.~Li, ``Image recovery via hybrid sparse representations: A deterministic
  annealing approach,'' {\em Selected Topics in Signal Processing, IEEE Journal
  of}, vol.~5, no.~5, pp.~953--962, 2011.

\bibitem{Zhang2010}
Q.~Zhang and B.~Li, ``Discriminative k-svd for dictionary learning in face
  recognition,'' in {\em Computer Vision and Pattern Recognition (CVPR), 2010
  IEEE Conference on}, pp.~2691--2698, IEEE, 2010.

\bibitem{Gao2012}
X.~Gao, K.~Zhang, D.~Tao, and X.~Li, ``Image super-resolution with sparse
  neighbor embedding,'' {\em Image Processing, IEEE Transactions on}, vol.~21,
  no.~7, pp.~3194--3205, 2012.

\bibitem{Yang2014}
M.~Yang, D.~Dai, L.~Shen, and L.~Gool, ``Latent dictionary learning for sparse
  representation based classification,'' in {\em Proceedings of the IEEE
  Conference on Computer Vision and Pattern Recognition}, pp.~4138--4145, 2014.

\bibitem{Sulam2015}
J.~Sulam and M.~Elad, ``Expected patch log likelihood with a sparse prior,'' in
  {\em Energy Minimization Methods in Computer Vision and Pattern Recognition},
  Lecture Notes in Computer Science, pp.~99--111, Springer International
  Publishing, 2015.

\bibitem{Romano2015b}
Y.~Romano and M.~Elad, ``Patch-disagreement as away to improve k-svd
  denoising,'' in {\em Acoustics, Speech and Signal Processing (ICASSP), 2015
  IEEE International Conference on}, pp.~1280--1284, IEEE, 2015.

\bibitem{Romano2015}
Y.~Romano and M.~Elad, ``{Boosting of Image Denoising Algorithms},'' {\em SIAM
  Journal on Imaging Sciences}, vol.~8, no.~2, pp.~1187--1219, 2015.

\bibitem{Papyan2016}
V.~Papyan and M.~Elad, ``Multi-scale patch-based image restoration,'' {\em
  Image Processing, IEEE Transactions on}, vol.~25, no.~1, pp.~249--261, 2016.

\bibitem{batenkov2017global}
D.~Batenkov, Y.~Romano, and M.~Elad, ``On the global-local dichotomy in
  sparsity modeling,'' {\em arXiv preprint arXiv:1702.03446}, 2017.

\bibitem{Zoran2011}
D.~Zoran and Y.~Weiss, ``{From learning models of natural image patches to
  whole image restoration},'' {\em 2011 International Conference on Computer
  Vision, ICCV.}, pp.~479--486, Nov. 2011.

\bibitem{Grosse2007}
R.~Grosse, R.~Raina, H.~Kwong, and A.~Y. Ng, ``{Shift-Invariant Sparse Coding
  for Audio Classification},'' in {\em Uncertainty in Artificial Intelligence},
  2007.

\bibitem{Bristow2013}
H.~Bristow, A.~Eriksson, and S.~Lucey, ``{Fast Convolutional Sparse Coding},''
  {\em 2013 IEEE Conference on Computer Vision and Pattern Recognition},
  pp.~391--398, June 2013.

\bibitem{Heide2015}
F.~Heide, W.~Heidrich, and G.~Wetzstein, ``Fast and flexible convolutional
  sparse coding,'' in {\em Computer Vision and Pattern Recognition (CVPR), 2015
  IEEE Conference on}, pp.~5135--5143, IEEE, 2015.

\bibitem{Kong2014}
B.~Kong and C.~C. Fowlkes, ``Fast convolutional sparse coding (fcsc),'' {\em
  Department of Computer Science, University of California, Irvine, Tech. Rep},
  2014.

\bibitem{Wohlberg2014}
B.~Wohlberg, ``Efficient convolutional sparse coding,'' in {\em Acoustics,
  Speech and Signal Processing (ICASSP), 2014 IEEE International Conference
  on}, pp.~7173--7177, IEEE, 2014.

\bibitem{Gu2015}
S.~Gu, W.~Zuo, Q.~Xie, D.~Meng, X.~Feng, and L.~Zhang, ``Convolutional sparse
  coding for image super-resolution,'' in {\em Proceedings of the IEEE
  International Conference on Computer Vision}, pp.~1823--1831, 2015.

\bibitem{Boyd2011}
S.~Boyd, N.~Parikh, E.~Chu, B.~Peleato, and J.~Eckstein, ``Distributed
  optimization and statistical learning via the alternating direction method of
  multipliers,'' {\em Foundations and Trends{\textregistered} in Machine
  Learning}, vol.~3, no.~1, pp.~1--122, 2011.

\bibitem{Elad_Book}
M.~Elad, {\em Sparse and Redundant Representations: From Theory to Applications
  in Signal and Image Processing}.
\newblock Springer Publishing Company, Incorporated, 1st~ed., 2010.

\bibitem{Donoho2003}
D.~L. Donoho and M.~Elad, ``Optimally sparse representation in general
  (nonorthogonal) dictionaries via ℓ1 minimization,'' {\em Proceedings of the
  National Academy of Sciences}, vol.~100, no.~5, pp.~2197--2202, 2003.

\bibitem{Pati1993a}
Y.~C. Pati, R.~Rezaiifar, and P.~S. Krishnaprasad, ``{Orthogonal Matching
  Pursuit: Recursive Function Approximation with Applications to Wavelet
  Decomposition},'' {\em Asilomar Conf. Signals, Syst. Comput. IEEE.},
  pp.~40--44, 1993.

\bibitem{Chen1989}
S.~Chen, S.~A. Billings, and W.~Luo, ``Orthogonal least squares methods and
  their application to non-linear system identification,'' {\em International
  Journal of control}, vol.~50, no.~5, pp.~1873--1896, 1989.

\bibitem{Chen2001}
S.~S. Chen, D.~L. Donoho, and M.~A. Saunders, ``{Atomic Decomposition by Basis
  Pursuit},'' {\em SIAM Review}, vol.~43, no.~1, pp.~129--159, 2001.

\bibitem{Donoho2006}
D.~Donoho, M.~Elad, and V.~Temlyakov, ``Stable recovery of sparse overcomplete
  representations in the presence of noise,'' {\em Information Theory, IEEE
  Transactions on}, vol.~52, pp.~6--18, Jan 2006.

\bibitem{Tropp2004}
J.~Tropp, ``{Greed is Good: Algorithmic Results for Sparse Approximation},''
  {\em IEEE Transactions on Information Theory}, vol.~50, no.~10,
  pp.~2231--2242, 2004.

\bibitem{Gribonval2003}
R.~Gribonval and M.~Nielsen, ``Sparse representations in unions of bases,''
  {\em Information Theory, IEEE Transactions on}, vol.~49, no.~12,
  pp.~3320--3325, 2003.

\bibitem{Bristow2014}
H.~Bristow and S.~Lucey, ``{Optimization Methods for Convolutional Sparse
  Coding},'' tech. rep., June 2014.

\bibitem{Morup2008}
M.~M{\o}rup, M.~N. Schmidt, and L.~K. Hansen, ``Shift invariant sparse coding
  of image and music data,'' {\em Submitted to Journal of Machine Learning
  Research}, 2008.

\bibitem{Zhu2015}
Y.~Zhu and S.~Lucey, ``Convolutional sparse coding for trajectory
  reconstruction,'' {\em Pattern Analysis and Machine Intelligence, IEEE
  Transactions on}, vol.~37, no.~3, pp.~529--540, 2015.

\bibitem{Zeiler2010}
M.~D. Zeiler, D.~Krishnan, G.~W. Taylor, and R.~Fergus, ``Deconvolutional
  networks,'' in {\em Computer Vision and Pattern Recognition (CVPR), 2010 IEEE
  Conference on}, pp.~2528--2535, IEEE, 2010.

\bibitem{Kavukcuoglu2010}
K.~Kavukcuoglu, P.~Sermanet, Y.-L. Boureau, K.~Gregor, M.~Mathieu, and Y.~L.
  Cun, ``Learning convolutional feature hierarchies for visual recognition,''
  in {\em Advances in neural information processing systems}, pp.~1090--1098,
  2010.

\bibitem{Huang2015}
F.~Huang and A.~Anandkumar, ``Convolutional dictionary learning through tensor
  factorization,'' {\em arXiv preprint arXiv:1506.03509}, 2015.

\bibitem{Szlam2010}
A.~Szlam, K.~Kavukcuoglu, and Y.~LeCun, ``Convolutional matching pursuit and
  dictionary training,'' {\em arXiv preprint arXiv:1010.0422}, 2010.

\bibitem{Welch1974}
L.~R. Welch, ``Lower bounds on the maximum cross correlation of signals
  (corresp.),'' {\em Information Theory, IEEE Transactions on}, vol.~20, no.~3,
  pp.~397--399, 1974.

\bibitem{Soltanalian2014}
M.~Soltanalian, M.~M. Naghsh, and P.~Stoica, ``Approaching peak correlation
  bounds via alternating projections,'' in {\em Acoustics, Speech and Signal
  Processing (ICASSP), 2014 IEEE International Conference on}, pp.~5317--5312,
  IEEE, 2014.

\bibitem{He2009}
H.~He, P.~Stoica, and J.~Li, ``Designing unimodular sequence sets with good
  correlations—including an application to mimo radar,'' {\em Signal
  Processing, IEEE Transactions on}, vol.~57, no.~11, pp.~4391--4405, 2009.

\end{thebibliography}

\end{document}